\journal{Information Science} 
\definecolor{dkgreen}{rgb}{0,0.6,0}
\definecolor{gray}{rgb}{0.5,0.5,0.5}
\definecolor{mauve}{rgb}{0.58,0,0.82}
\tiny\color{gray},
\def\lst@makecaption{%
  \def\@captype{table}%
  \@makecaption
}
\newtheorem{definition}{Definition}
\newtheorem{example}{Example}  
\newtheorem{strategy}{Strategy}    
\newtheorem{proof}{Proof}    
\newtheorem{lemma}{Lemma}  
\newtheorem{theorem}{Theorem}   
\begin{document}

\begin{frontmatter}

\title{Correlated Utility-based Pattern Mining}

\author{Wensheng Gan$ ^{1,5} $,
	Jerry Chun-Wei Lin$ ^{1,2}$*,
	Han-Chieh Chao$ ^{3} $,
	Hamido Fujita$ ^{4} $ and
	Philip S. Yu$ ^{5} $}
	\address{$ ^{1} $Harbin Institute of Technology (Shenzhen), Shenzhen, China} 
	\address{$ ^{2} $Western Norway University of Applied Sciences, Bergen, Norway}
	\address{$ ^{3} $National Dong Hwa University, Hualien, Taiwan}
	\address{$ ^{4} $Iwate Prefectural University, Iwate, Japan}
	\address{$ ^{5} $University of Illinois at Chicago, Chicago, USA}
	\address{Email: wsgan001@gmail.com, jerrylin@ieee.org, hcc@ndhu.edu.tw, HFujita-799@acm.org, psyu@uic.edu}

\cortext[cor1]{Corresponding author. Email: jerrylin@ieee.org.} 

\begin{abstract}

In the field of data mining and analytics, the utility theory from Economic can bring benefits in many real-life applications. In recent decade, a new research field called utility-oriented mining has already  attracted great attention. Previous studies have, however, the limitation that they rarely consider the inherent correlation of items among patterns. Consider the purchase behaviors of consumer, a high-utility group of products (w.r.t. multi-products) may contain several very high-utility products with some low-utility products. However, it is considered as a valuable pattern even if this behavior/pattern may be not highly correlated, or even happen by chance. In this paper, in light of these challenges, we propose an efficient utility mining approach namely non-redundant \textbf{\underline{Co}}rrelated high-\textbf{\underline{U}}tility \textbf{\underline{P}}attern \textbf{\underline{M}}iner (CoUPM) by taking positive correlation and profitable value into account. The derived patterns with high utility and strong positive correlation can lead to more insightful availability than those patterns only have high profitable values. The utility-list structure is revised and applied to store necessary information of both correlation and utility. Several pruning strategies are further developed to improve the efficiency for discovering the desired patterns. Experimental results show that the non-redundant correlated high-utility patterns have more effectiveness than some other kinds of interesting patterns. Moreover, efficiency of the proposed CoUPM algorithm significantly outperforms the state-of-the-art algorithm. 

\end{abstract}

\begin{keyword}
   	Economic, utility mining, positive correlation, pruning strategy
\end{keyword}

\end{frontmatter}


\section{Introduction}

In many real-world applications, data mining \cite{chen1996data,gan2017data} turns a large collection of data into knowledge, and one of the common tasks of data mining is pattern mining \cite{agrawal1994fast,han2004mining,2gan2017mining}. For instance, to analyze the users' click-stream or purchase behavior that contains auxiliary valuable with hidden information, pattern mining algorithms \cite{agrawal1994fast,han2004mining} can be applied to identify embedded patterns and useful knowledge. In the past decades, numerous data mining frameworks and approaches, e.g., frequent pattern mining (FPM) \cite{agrawal1994fast,han2004mining,3gan2018survey} and association rule mining (ARM) \cite{agrawal1994fast}, have been extensively studied. FPM extracts frequent patterns, and ARM aim at mining association rules. Besides, FPM is considered as the first step of ARM and more challenging. In general, these desired patterns represent interesting relationships among objects or patterns in different types of databases. Mining of insightful patterns has been successfully applied in many real-world applications. However, most of these pattern mining algorithms \cite{agrawal1994fast,han2004mining,2gan2017mining} mainly measure the interestingness of patterns based on the co-occurrence frequencies of patterns. Other implicit factors in data such as the weight, interest, risk or profit of patterns are not effectively utilized. Besides, all objects are considered to have equal importance, hence the objects/patterns that are real important to users may not be found effectively.

Therefore, some researchers are interested in incorporating both subjective measure (e.g., risk, interest and utility) and objective measures (e.g., correlation, frequency and confidence) for mining valuable patterns, such as itemsets and association rules. Among them, one utility-oriented data mining framework called high-utility pattern mining (HUPM) \cite{liu2005two,ahmed2009efficient,tseng2013efficient,liu2012mining} was proposed. Inspired by the utility theory \cite{marshall2005principles}, HUPM incorporates some useful factors, e.g., quantity, unit profit, and other useful factors, to identify the patterns which can bring valuable profits for retailers or managers in business. In general, the \textit{utility} can also be other user-specified subjective measure, e.g., risk, interest, significance, satisfaction, and usefulness, etc. The concept of HUPM has been extended to utility mining \cite{liu2005two,ahmed2009efficient,tseng2013efficient,liu2012mining} and it serves as a critical role in data science. Up to now, utility mining has become an important branch of data analytics, which aims at utilizing the auxiliary information from data. It has been widely utilized to discover valuable information  and hidden knowledge in recent decade since utility mining can bring more benefits in many real-life applications. Many studies of utility mining focus on developing the efficient algorithms, such as Two-Phase~\cite{liu2005two}, IHUP \cite{ahmed2009efficient}, UP-growth \cite{tseng2010up}, UP-growth+ \cite{tseng2013efficient}, HUI-Miner \cite{liu2012mining},  FHM \cite{fournier2014fhm}, HUP-Miner \cite{krishnamoorthy2015pruning}, and EFIM \cite{zida2017efim}. At the same time, many studies focus on the effectiveness for mining utility-oriented patterns. For instance, mining high-utility patterns from uncertain data \cite{lin2016efficient}, dynamic data \cite{lin2015fast,2gan2018survey}, and big data \cite{lin2015mining}. 

Utility mining has been extensively studied and successfully applied in many real-world applications \cite{2gan2018survey}. However, the existing studies of utility mining are mainly focused on the identification of high-utility patterns themselves, and thus the hidden correlation among the derived patterns is still limited. In other words, they ignore the inherent correlation of objects/items inside the patterns. This problem may easily lead to the identification of high-utility patterns with false negatives and false positives. Therefore, an important limitation of current utility mining algorithms is that a huge amount of patterns may be discovered while most of them contain many weakly correlated items.  For example, it is common that retail stores cross-sell some products/items to improve the total revenue. Some products are usually sold with discount or free gifts to stimulate the sale of other related products/items. As shown in Figure \ref{fig:AmazonExample}\footnote{\url{https://www.amazon.com}}, many products are bought together for cross-selling in Amazon. This example explains the reasons why correlation is an important factor, especially in utility mining. The really strongly correlated products (or purchase behaviors) are more useful for cross-selling; otherwise, those meaningless, redundant or non-discriminative patterns may be misleading for recommendation.  Hence, it is a critical issue to address the effectiveness problem for discovering positively correlated and high-utility patterns based on the utility and correlation measures.

\begin{figure}[htbp]
	\centering
	\includegraphics[scale=0.53]{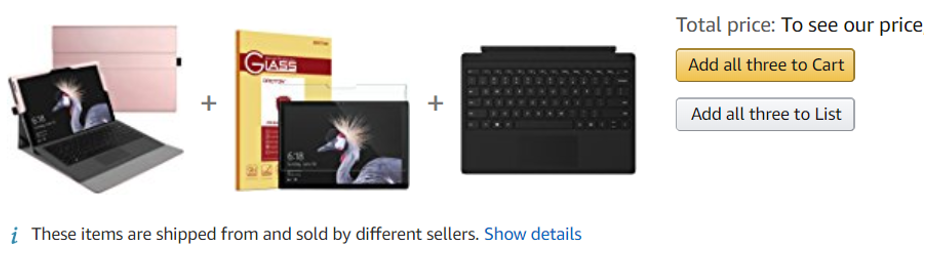}
	\captionsetup{justification=centering}
	\caption{Bought together products in Amazon} 
	\label{fig:AmazonExample}	
\end{figure}

In the past few decades, some well-known correlation measures, e.g., the support \cite{agrawal1993mining,geng2006interestingness}, confidence \cite{geng2006interestingness}, all-confidence \cite{omiecinski2003alternative}, frequency affinity \cite{ahmed2011framework}, and coherence \cite{omiecinski2003alternative}, have been studied in data mining. In the field of utility mining, the HUIPM \cite{ahmed2011framework} and FDHUP \cite{lin2017fdhup} algorithms were proposed to discover high-utility interesting patterns (HUIPs) with strong frequency affinity. The concept of affinity utility is introduced in HUIPM \cite{ahmed2011framework}. However, the tree-based HUIPM algorithm is time-consuming and may lead to the problem of combinatorial explosion. The faster FDHUP algorithm \cite{lin2017fdhup} utilizes two compact data structures and three pruning strategies to efficiently discover discriminative HUIPs.  However, the co-occurrence frequency instead of inherent correlation is measured as the correlation factor in HUIPM \cite{ahmed2011framework} and FDHUP \cite{lin2017fdhup}. Recently, a projection-based approach namely CoHUIM \cite{gan2018extracting} was developed to discover correlated high-utility patterns with consideration of the inherent correlation among items inside a pattern. It adopts a measure called \textit{Kulc} \cite{kulczynski1928pflanzenassoziationen,wu2010re}, which has the \textit{null} (transaction)-invariant property, as the correlation factor. The discovered patterns have strong positive inherent correlation, and they can bring real benefits to utility mining. However, the projection-based CoHUIM  may encounter the efficiency problem, and may cause a lot of memory consumption since it relies on the candidate generation-and-test mechanism.

In light of the above challenges, we propose an efficient utility mining framework, namely non-redundant \textbf{\underline{Co}}rrelated high-\textbf{\underline{U}}tility \textbf{\underline{P}}attern \textbf{\underline{M}}iner (CoUPM) with the consideration of strong positive correlation and utility theory. CoUPM can not only extract non-redundant strongly correlated and profitable patterns, but also achieve better efficiency. We evaluate the effectiveness of the proposed CoUPM based on the correlation measure \textit{Kulc}. For comparison, we take the well-known traditional HUPM model and the state-of-the-art CoHUIM algorithm into account to compare the designed algorithm for the correlated utility-based pattern mining problem. The major contributions of this paper are summarized as follows.

\begin{itemize}
	\item We adopt correlated significance as a key criterion for evaluating the high-utility patterns in the HUPM problem. Understanding such correlation can provide useful insights on the discovered results, and this makes utility mining with a higher effective performance than the existing HUPM models. The utility factor and relations among items/objects are taken into account for pattern evaluation.  
	
	\item We design an efficient CoUPM algorithm for mining correlated and high-utility patterns from quantitative databases in one-phase. The revised utility-list structure is used to store the compact information of potential patterns from the processed database. This approach is able to early filter a large amount of unpromising patterns, and return the significant patterns in the mining process.
	
	\item We develop several pruning techniques in a depth-first search manner, which consist of the utilization of correlation property and utility property. Therefore, CoUPM can quickly discover a set of highly correlated and high-utility patterns.
	
	\item Extensive experiments on both real and synthetic datasets show that the proposed one-phase CoUPM algorithm has better effectiveness and efficiency than the existing algorithms.

\end{itemize}

The rest of this paper is organized as follows. Some related works of utility mining are briefly reviewed in Section \ref{sec:relatedwork}. The key preliminaries and problem statement are given in Section \ref{sec:preliminaries}.  Details of the proposed CoUPM algorithm are described in Section \ref{sec:algorithm}. The evaluation of effectiveness and efficiency of CoUPM are provided in Section \ref{sec:experiments}.  Finally, conclusion and future work are drawn in Section \ref{sec:conclusion}.

\section{Related Work}
\label{sec:relatedwork}
This research work is related to the studies in support-based pattern mining, utility-based pattern mining, and the development of pattern mining with consideration of affinity/correlation.

\subsection{Utility-based Data Mining}

In the past few decades, many pattern mining frameworks and algorithms have been developed and applied to various real-life applications. Most of these studies use support \cite{agrawal1994fast,agrawal1993mining} and confidence \cite{agrawal1993mining} to identify interesting patterns, e.g., frequent patterns \cite{agrawal1994fast,han2004mining,agrawal1993mining}. These studies  measure the interestingness of patterns mainly based on the co-occurrence frequency \cite{agrawal1994fast,han2004mining}. Therefore, many interesting and high profitable patterns may not be found. To address these problems, a new data mining framework named utility-oriented pattern mining \cite{tseng2013efficient,yao2006mining,gan2018survey} is proposed. It aims at discovering the high-utility patterns rather than the support/confidence-based patterns.  Utility mining considers the quantity and unit profit of objects/items, as well as other implicit factors. In the past decade, the problem of high-utility pattern mining (HUPM) has been extensively studied, such as Two-Phase \cite{liu2005two}, IHUP \cite{ahmed2009efficient}, UP-growth \cite{tseng2010up}, UP-growth+ \cite{tseng2013efficient}, and HUI-Miner \cite{liu2012mining}. There are mainly four categories of the existing HUPM algorithms, including Apriori-like, tree-based, utility-list-based, and hybrid approaches. The well-known Apriori-like approach for HUPM is the Two-Phase \cite{liu2005two} algorithm which utilizes the transaction-weighted utilization (TWU) model \cite{liu2005two}. Inspired by FP-growth \cite{han2004mining}, some tree-based algorithms are proposed to mining high-utility patterns, such as IHUP \cite{ahmed2009efficient}, UP-growth \cite{tseng2010up}, UP-growth+ \cite{tseng2013efficient}, and HUP-tree \cite{lin2011effective}. All of them outperform the Apriori-like algorithms. Liu et al. then introduced the HUI-Miner \cite{liu2012mining} by utilizing the utility-list structure and a new concept called remaining utility. Recently, many other utility-list based approaches have been developed, such as  FHM \cite{fournier2014fhm}, HUP-Miner \cite{krishnamoorthy2015pruning}, and EFIM \cite{zida2017efim}.

The above mentioned HUPM algorithms focus on improving the mining efficiency, however, the effectiveness of utility mining task is also quite important. For example, how to develop different and flexible models to address the utility mining task in different types of data, constraints and applications are very necessary and challenging. Up to now, some studies that focus on the effectiveness of utility mining have been extensively developed, such as HUPM on uncertain data \cite{lin2016efficient} or dynamic data \cite{lin2015fast,2gan2018survey}. Lin et al. proposed a series of models to extract high-utility patterns from uncertain data \cite{lin2016efficient,lin2017efficiently}, temporal data \cite{lin2015efficient}, and dynamic data \cite{lin2015fast,2gan2018survey,3lin2016fast}. Based on the new concept of average utility \cite{hong2011effective}, Wu et al. introduced a new upper bound for mining high average utility patterns \cite{wu2018tub}. Besides, several evolutionary computation approaches (e.g., HUIM-BPSO \cite{lin2017binary} and ACO-based HUIM-ACS \cite{wu2017aco})  are proposed to discover high-utility patterns. Tseng et al. introduced the concise representation \cite{tseng2015efficient} and top-$k$ issue  \cite{tseng2016efficient} for HUPM. Different from the itemset-based models, other advanced models were extensively studied, including the association rule-based \cite{mai2017lattice}, sequence-based \cite{lan2014applying}, and episode-based \cite{lin2015discovering} utility mining models. Recently, Gan et al. proposed a new utility measure named utility occupancy to address the utility mining problem \cite{gan2019huopm}. An overview of the current development of utility mining was presented recently \cite{2gan2018survey}.

\subsection{Affinity/Correlation Pattern Mining}

In the data mining literature, several association measures for association mining and analytics have been studied, such as confidence \cite{agrawal1993mining}, lift \cite{brin1997beyond}, and the cosine measure \cite{wu2012cosine}. Association analysis may generate many rules, while many of them are not useful or meaningful for decision-making. Different from association analysis, some studies have been explored for mining affinity patterns or correlation patterns. Omiecinski et al. first proposed three interesting measures for pattern mining called any-confidence \cite{omiecinski2003alternative}, all-confidence \cite{omiecinski2003alternative}, and bond \cite{omiecinski2003alternative}. To find strong affinity patterns which  may contain low-support items, Kim et al. first introduced  hyperclique pattern and hyperclique (\textit{h})-confidence \cite{kim2004ccmine}. The \textit{h}-confidence is equivalent to the all-confidence. Wu et al. found that the degree of expectation-based correlation is highly influenced by the number of null transactions \cite{wu2010re}. Thus, most of the existing measures, e.g., all-confidence \cite{omiecinski2003alternative}, bond \cite{omiecinski2003alternative}, cosine \cite{wu2012cosine}, are not suitable to evaluate correlation in large database that contains many and unstable null transactions. Due to the \textit{null} (transaction)-invariant property, the correlation measure in Kulczynsky \cite{kulczynski1928pflanzenassoziationen,wu2010re} is independent of the dataset size. Besides, some other measures for the study of correlation have been proposed \cite{geng2006interestingness}. Different from the traditional data mining approaches which ignore the correlation among extracting results, the derived affinity/correlation patterns can return more insightful knowledge for decision-making.

\subsection{Comparative Analysis with Previous Works}

As mentioned before, the inherent correlation of items inside the patterns has not been considered in most of the HUPM algorithms yet. In the area of utility-oriented pattern mining, only few studies concern the utility and correlation together to derive the desired patterns. For instance, the HUIPM \cite{ahmed2011framework} and FDHUP \cite{lin2017fdhup} algorithms consider both frequency-affinity and utility as the two key measures to derive the desired patterns. However, the co-occurrence frequency of transactions is regarded as the correlation factor. Recently, Fournier-Viger et al. \cite{fournier2016mining} introduced a FCHM model to extract correlated high-utility itemsets (CoHUIs). In the framework of FCHM, the bond measure \cite{omiecinski2003alternative} is used to evaluate the correlation value of items among a pattern. Moreover, the projection-based CoHUIM \cite{gan2018extracting}  algorithm has presented to take the correlation measure $-$ \textit{Kulc}, which has null (transaction)-invariant property, into account for mining the interesting patterns. However, it may encounter the efficiency problem and may easily cause a lot of memory consumption. The reason is that CoHUIM firstly generates the complete correlated high-utility upper-bound itemsets (CHUUBIs) by recursively processing the projection, which  uses the upper bound TWU \cite{liu2005two} and \textit{Kulc} measure. It then calculates actual utilities for all candidates in CHUUBIs to discover the final CoHUIs. In this paper, the proposed CoUPM method utilizes the revised utility-list structure and several powerful pruning strategies to significantly improve the mining efficiency.


\section{Preliminaries and Problem Formulation}
\label{sec:preliminaries}

In this section, we first introduce some basic preliminaries of utility mining, and then discuss the differences between the addressed problem in this paper and the existing tasks. Finally, we provide a normal problem formulation of correlated high-utility pattern mining.

\subsection{Database with Utility Factor}

Note that we use the concept of \textit{utility} to present the revenue for sellers. In the following contents, let $X$ = $\{i_1,$ $i_2,$ $ \dots, i_k\}$ denote a combination/group of patterns/products, and $X$ is called a $k$-itemset. In general, a unit profit of $X$ is associated to the \textit{cost price} minus \textit{sell price}. As mentioned before, the \textit{utility} concept can be regarded as other user-specified subjective measure, e.g., risk, interestingness, satisfaction, usefulness, etc. According to the utility theory \cite{marshall2005principles}, we have the following concepts and formulation.

\begin{example}
	Consider an e-commerce database as shown in Table \ref{figDatabase}, it is used as a running example in this paper. Similar to the e-commerce database provided by RecSys Challenge 2015\footnote{\url{https://recsys.acm.org/recsys15/challenge/}}, this example database contains five purchase records (e.g., $T_1, T_2, \dots, T_5$) with auxiliary information. Behavior $T_1$ is occurred in timestamp ``07/12 10:05:30", and contains products $\{a\}$, $\{b\}$, and $\{e\}$ with a purchase quantity of 3, 1 and 2, respectively. Table \ref{figUnitProfits} indicates that the unit profit (also called external profit) of these three products is $\{$a$: \$3\}$, $\{$b$: \$1\}$, and $\{$e$: \$10\}$, respectively. Note that the unit profit of each product is pre-defined by user/seller. In the addressed problem, this table is called \textit{profit-table}. 
\end{example}

\begin{table}
	\centering
	\caption{An e-commerce database}
	\begin{tabular}{ccll} \hline
		\textbf{Tid}  & \textbf{User} & \textbf{Timestamp}	&  \textbf{Purchase record}   \\ \hline
		$T_1$ &	 $U_1$  &  07/12  10:05:30 & $(a,3)(b,1)(e,2)$  \\ 
		$T_2$ &	 $U_2$  &  07/12  10:11:10 &	$(a,2)(b,3)(c,1)(d,1)$ \\
		$T_3$ &	 $U_3$  &  07/12  10:15:48	& $(a,1)(d,3)(e,2)$  \\
		$T_4$ &	 $U_4$  &  07/12  10:18:00 & 	$(a,1)(b,5)(c,2)(d,1)(e,1)$  \\ 
		$T_5$ &	 $U_5$  &  07/12  10:25:20 & 	$(a,2)(b,3)(e,3)$  \\ \hline
	\end{tabular}
	\label{figDatabase}
\end{table}

\subsection{Preliminaries of Utility Mining}

Given a \emph{quantitative database} $D$ such that $D$ = $\{T_1,$ $T_2,$ $ \dots, T_n\}$ contains a set of quantitative transactions. Each transaction $T_c$ with a timestamp is a set of items/records. $T_c \in D$ is a subset of $I$ and has a unique identifier called its \textit{Tid}. Let $I$ be a set of distinct items, $I$ = $\{i_1,$ $i_2,$ $\dots, i_m\}$. Each item $i \in I$ is associated with a positive value $pr(i)$ namely its \emph{unit profit}. For each item $i \in T_c$, a positive number $q(i, T_c)$ is called \emph{occur quantity} of $i$. The utility contribution of a group of products, $X$ = $\{i_1,$ $i_2,$ $\dots, i_j\}$, is related to the total utilities from each $i \in X$ after marketing.

\begin{table}
	\centering
	\caption{External profit value (unit profit)}
	\begin{tabular}{|c|c|c|c|c|c|} \hline
		\textbf{Product} & $a$ & $b$ & $c$ & $d$ & $e$ \\ \hline
		\textbf{Profit} (\$) & 3 & 1 & 7 & 2 & 10 \\ \hline
	\end{tabular}
	\label{figUnitProfits}
\end{table}

\begin{definition}
	\rm The utility of a group of products $X \subseteq I$ in a transaction $T_c$ is $ u(X, T_c)$ = $\sum_{i \in X} {u(i, T_c)}$, where $u(i, T_c)$ is the \emph{utility/profit of a product} $i \in I$ in a transaction $T_c$, and $u(i, T_c)$ can be calculated as $u(i, T_c)$ = $pr(i) \times q(i, T_c)$. Thus, $ u(X, T_c)$  represents the utilities generated by all items $i \in X$ in $T_c$. Consider the entire database, let $u(X)$ denote the total utility of $X$ in $D$, then $ u(X)$ = $\sum_{X\subseteq T_{c}\wedge T_{c}\in D} u(X, T_{c}) $.  
\end{definition}

\begin{definition}
	\rm Given a quantitative database $D$, the transaction utility  of a transaction $ T_{c} $, denoted as $tu(T_{c})$, can be calculated as $tu(T_{c})$ = $ \sum_{i_{j}\in T_{c}}u(i_{j}, T_{c}) $, where $i_j$ is the $j$-th item in $ T_c $. Then the total utility of the entire $D$ is denoted as \textit{TU}, and can be calculated as: \textit{TU} = $\sum_{T_{c} \in D}tu(T_{c}) $.
\end{definition}

\begin{example} 
	In Table \ref{figDatabase}, the utility of $e$ in $T_3$ is $u(e, T_3)$ = $2 \times \$10$ = \$20, and the utility of $\{d,e\}$ in $T_3$ is  $u(\{d,e\}, T_3)$ =  $u(d, T_3)$ + $u(e, T_3)$ = 3 $\times$ \$2 + 2 $\times $ \$10 = \$26. The utility of $\{d,e\}$ in the entire database is $u(\{d,e\})$ = $u(\{d,e\},T_3)$ + $u(\{d,e\},T_4)$ = \$26 + \$12 = \$38. Consider the first transaction in Table \ref{figDatabase}, $ tu(T_{1}) $ = $ u(a, T_{1}) $ + $ u(b, T_{1}) $ + $ u(e, T_{1}) $ = \$9 + \$1 + \$20 = \$30. Then the transaction utilities of \textit{T}$ _{1} $ to \textit{T}$ _{5} $ are respectively calculated as \textit{tu}(\textit{T}$ _{1} $) = \$30, \textit{tu}(\textit{T}$ _{2} $) = \$18, \textit{tu}($T_{3} $) = \$29, \textit{tu}($T_{4}$) = \$34, and \textit{tu}($T_{5}$) = \$39. Thus, the total utility of in Table \ref{figDatabase} is $TU$ = \$30 + \$18 + \$29 + \$34 + \$39 = \$150.
\end{example}

\subsection{Correlation for Data Mining}

As stated in introduction, the current HUPM algorithms have an important limitation that a huge amount of derived patterns may contain many items which are weakly correlated. HUIPM \cite{ahmed2011framework} and FDHUP \cite{lin2017fdhup} used a new measure called frequency affinity to evaluate the correlation of high-utility patterns. The minimum  quantity among all quantities of items inside a pattern in each transaction is used to calculate the affinitive frequency. However, it is not enough to reveal  the real inherent correlation of the desired patterns. In the past,  the Kulczynsky (abbreviated as \textit{Kulc}) measure \cite{omiecinski2003alternative,kulczynski1928pflanzenassoziationen,wu2010re} was widely used to evaluate the inherent correlation of a generalized pattern. Its definition is given as follows.

\begin{definition}
	\label{def_Correlation}
	\rm  The pattern correlation evaluates the strength of the inherent correlation between its items. In general, there are three types of correlations among items in a pattern, including 1) \textit{positive correlation}, 2) \textit{non-correlation}, and 3) \textit{negative correlation}.
\end{definition}

\begin{definition}
	\label{def_kulc}
	\rm The \textit{Kulc} value is an interesting measure to evaluate the correlation between items inside a pattern. According to the previous studies \cite{kulczynski1928pflanzenassoziationen,wu2010re}, the \textit{Kulc} value of a group of patterns $X$ is  defined as follows:
	\begin{equation}
		Kulc(X)= \dfrac{1}{k}\sum _{i_{j}\in X}\dfrac{sup(X)}{sup(i_j)},	\end{equation}	
	where $ i_j $ is the $j$-item in $X$ = $\{i_1,$ $i_2,$ $ \dots, i_k\}$ which totally contains $k$ distinct items. 
\end{definition}

Therefore, the range of \textit{Kulc} value is [0, 1] and it can be easily used to evaluate whether the items in a specific pattern have a \textit{positive correlation} or not. Clearly, the minimum correlation threshold for measuring \textit{Kulc} value can be specified by user. Unlike other existing correlation measures, \textit{Kulc} has the \textit{null} (transaction)-invariant property. Previous studies \cite{kulczynski1928pflanzenassoziationen,wu2010re} have shown that \textit{Kulc} value is more acceptable and suitable than other correlation measures to evaluate the correlation in data mining. The reason is that it is independent by the dataset size. Based on the above definitions, we have the following problem formulation.

\begin{example} 
	Consider the running example in Table \ref{figDatabase}, when the settings of \textit{minUtil} and \textit{minCor} are respectively 20\% and 0.7, the desired CoHUIs are the set of \textit{\textbf{Patterns$_{CoHUI}$}} = \{$\{e\}$, $\{a,b\}$, $\{a,c\}$, $\{a,e\}$, $\{b,e\}$, $\{a,b,e\}$, $\{b,c,d\}$\}, while the set of high-utility patterns derived by the exiting HUPM algorithms are 
	\textit{\textbf{Patterns$_{HUI}$}} = \{$\{e\}$, $\{a,b\}$, $\{a,c\}$, $\{a,e\}$, $\{b,e\}$, $\{d,e\}$, $\{a,b,c\}$, $\{a,b,e\}$, $\{a,c,d\}$, $\{a,d,e\}$, $\{b,c,d\}$, $\{a,b,c,d\}$, $\{a,b,c,e\}$, $\{b,c,d,e\}$, $\{a,b,c,d,e\}$\}.	 It is clearly seen that most of the patterns in \textit{\textbf{Patterns$_{HUI}$}} do not have a positive correlated relationship. For instance, the patterns $\{d,e\}$ and $\{a,b,c\}$ have their \textit{Kulc} values as \textit{Kulc}($\{d,e\}$) $ \approx $ 0.5833 and \textit{Kulc}($\{a,b,c\}$) $ \approx $ 0.6333. What's more, the desired interesting CoHUIs do not have the \textit{downward closure} property \cite{agrawal1994fast}. For example, the pattern $\{a\}$ is not a CoHUI, while its supersets $\{a,b\}$, $\{a,c\}$, $\{a,b,e\}$ are the CoHUIs. Previous studies \cite{liu2005two,ahmed2009efficient,tseng2013efficient,liu2012mining,gan2018survey} have shown that the utility of a pattern may be higher, equal to, or lower than that of its super-pattern and/or sub-pattern. Consequently, many  pruning techniques of search space that rely on the downward  closure property of Apriori \cite{agrawal1994fast} cannot be directly applied to discover CoHUIs. 
\end{example}

As far, we have pointed out the major differences between HUIs and CoHUIs. The models aims at finding different patterns regarding to varied problems. Based on above introduction, the addressed problem in this paper is formulated below.

\subsection{Problem Formulation}

\begin{definition}
	\label{def_10}
	\rm A group of patterns \textit{X} in a quantitative database $ D $ is defined as a strongly correlated high-utility itemset (denoted as CoHUI) if it satisfies the following two criteria: 1) $ u(X) \geq minUtil \times TU $; 2) $ Kulc(X) \geq minCor $. Otherwise, $X$ is not a CoHUI, it may have a low utility or a negative correlation.  
	Here, $ minUtil $ is a minimum utility threshold and $ minCor $ is a minimum positive correlation threshold; both of them can be specified by users' subjective preferences. In this paper, $ minUtil $ is a percentage value with respect to the total utility of a quantitative database. Therefore, the problem of correlated utility-based pattern mining (abbreviated as correlated HUPM) is to discover the complete set of  significant and insightful CoHUIs in the entire database.
\end{definition}

HUPM has shown its powerful potential in many applications and achieved outstanding performance compared with the support/confidence based data mining methods. Based on the utility theory \cite{marshall2005principles} and correlation measure, the importance of utility and relations among items/objects are simultaneously taken into account.  The extracted results of CoHUIs are high corresponding to positive correlation and profitable values.

\section{Proposed One-Phase Algorithm: CoUPM}
\label{sec:algorithm}

In this section, we propose an one-phase CoUPM algorithm to discover useful patterns, which are not only strongly correlated but also high profitable. CoUPM utilizes a vertical data structure named revised utility-list. Moreover, several effective pruning strategies which utilize the correlation and utility factors are applied to prune the search space and reduce memory cost. Details of the revised utility-list, the adopted pruning strategies, and the main procedures of the proposed algorithm are respectively described below.  

\subsection{Properties of the CoHUI}
Most existing studies have been demonstrated that both the  \textit{Kulc} measure \cite{kulczynski1928pflanzenassoziationen,wu2010re} and utility measure \cite{liu2005two} are neither monotonic nor anti-monotonic. In other words, a pattern may have a lower, equal or higher \textit{Kulc} value (or utility value) than that of its subsets. Without holding the anti-monotonicity, the search space of the addressed problem is hard to be efficiently reduced in the mining process.  To hold the \textit{downward closure} property for mining high-utility patterns, a concept called \emph{transaction-weighted utilization} \cite{liu2005two} is commonly used in previous studies. 

\begin{definition}
	\rm{
		Given a database $D$ and a specific pattern $X \subseteq D$, the \emph{transaction-weighted utilization} (\textit{TWU}) \cite{liu2005two} of $X$ is defined as the sum of the total utilities of transactions containing $X$, as shown in the following equation: 
		\begin{equation}
		    TWU(X) = \sum _{ X\subseteq T_{c} \wedge T_{c} \in D}tu(T_{c}). 
		\end{equation} 
	}
\end{definition}

\begin{example}
	Consider two patterns $\{e\}$ and $\{d,e\}$ in the running example, then $TWU(e)$ = $tu(T_1)$ + $tu(T_3)$ + $tu(T_4)$ + $tu(T_5)$  = \$30 + \$29 + \$34 + \$39 = \$132, and $TWU(\{d,e\})$ = $tu(T_3)$ + $tu(T_4)$ = \$29 + \$34 = \$63.
\end{example}

Based on the definition of CoHUI and utility property, the CoHUI does not hold the anti-monotonicity. In other words, a CoHUI may have lower, equal or higher utility value (or \textit{Kulc} value)  than any of its subsets. The \textit{TWU} concept solves the anti-monotonicity problem by overestimating the overall utility of patterns in entire database without missing any high-utility patterns. However, a huge number of low-utility patterns still may be regarded as candidates since \textit{TWU} is a loose upper-bound.

\subsection{Revised Utility-List with Correlation}

In  previous studies, several approaches \cite{liu2012mining,fournier2014fhm,krishnamoorthy2015pruning} use the utility-list \cite{liu2012mining} structure as a component to store and calculate the necessary information. Thanks to the vertical data structure of utility-list \cite{liu2012mining}, these approaches can efficiently discover high-utility patterns without multiple database scans. But the original utility-list only deals with utility value, and does not contain the support and correlation information. The addressed problem needs a more flexible version of calculating scheme to obtain the auxiliary information. In the proposed CoUPM algorithm, we revise the utility-list \cite{liu2012mining} to make it suitable for computing the correlation and utility. Besides, a concept called \textit{remaining utility} \cite{liu2012mining} is applied to obtain the estimated upper bound on utility, which will be presented in next subsection. Inspired by the utility-list \cite{liu2012mining} structure, the revised utility-list structure is defined as follows.

\begin{definition} 
	\label{def_totalOrder}
	\rm Without loss of generality, assume that all items in every transaction are sorted in the lexicographic order.  Let the total order on items is denoted as  $\prec$. 
\end{definition}

\begin{definition}
	\rm Let $ ru(X,T_c) $ denote the remaining utility \cite{liu2012mining}  of a group of items/products $X$ in a transaction $T_c$. Then $ ru(X,T_c) $ is the sum of the  utility values of each item appearing after $X$ in $T_c$ according to the total order $\prec$. Thus, the remaining utility of $X$ does not include the utilities of items in $X$ itself. It can be represented as: 
	\begin{equation}
		ru(X,T_c) = \sum_{ i_j \notin X \wedge X \subseteq T_c \wedge X \prec i_j }u(i_j,T_c).
	\end{equation}
\end{definition}

\begin{definition}
	\rm The \textit{revised utility-list} of a pattern $X$ in a quantitative database $D$ consists of pattern name (\textit{\underline{name}}), support count (\textit{\underline{sup}}), and a set of tuples corresponding to the transactions where $X$ appears (\textit{\underline{tuple}}). Here, $\underline{sup}$ is the related support of $X$ that occurred in the entire database, and it is equal to the number of tuples in this vertical data structure. A tuple is defined as $<$$\underline{tid}, \underline{iu}, \underline{ru}$$>$ for each transaction $T_{c}$ containing $X$. 
	\begin{itemize}
		\item $\underline{tid}$: the transaction identifier of $T_{c}$; 
		\item $\underline{iu}$: the actual utility of $X$ in $T_{c}$, w.r.t. $ u(X,T_c) $;
		\item $\underline{ru}$: the remaining utility of $X$ in $T_{c}$, w.r.t. $ ru(X,T_c) $.

	\end{itemize}
\end{definition}

\begin{figure}[!htbp]
	\centering
	\includegraphics[scale=0.65]{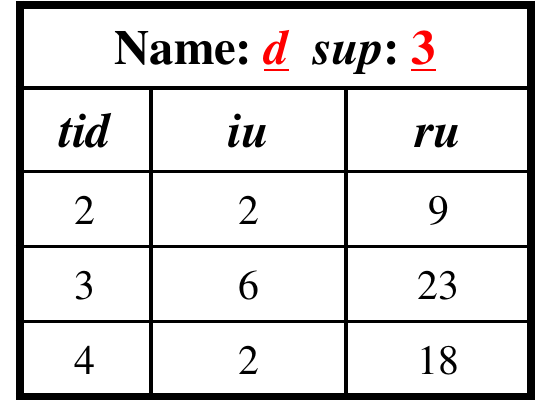}
	\captionsetup{justification=centering}
	\caption{Constructed revised utility-list of 1-itemset ($d$).}
	\label{fig:utilitylistOfD}
\end{figure}

\begin{example} 	
	Consider  $T_4$ and two patterns $\{d\}$ and $\{d,e\}$ in Table \ref{figDatabase}. Assume the total order $\prec$  adopts the support ascending order of all 1-itemsets. Since the support values of all 1-items in Table \ref{figDatabase} are \{$a$:5, $b$:4, $c$:2, $d$:3, $e$:4\}, the total order is $ c \prec d \prec b \prec e \prec a $. Then we have that $ ru(d,T_4)$ = $u(b,T_4)$ + $u(e,T_4)$ + $u(a,T_4) $ = \$5 + \$10 + \$3 = \$18, and $ ru(\{d,e\},T_{4})$ = $u(a,T_4)$ = \$3. Consider the running example and the defined total order $\prec$, the revised utility-list of pattern $(d)$ is \{($T_2$, \$2, \$9), $(T_3$, \$6, \$23), $(T_4$, \$2, \$18)\}, and its total support is 3, as shown in Figure \ref{fig:utilitylistOfD}. 
\end{example}

\begin{figure}[!htbp]
	\centering
	\includegraphics[scale=0.65]{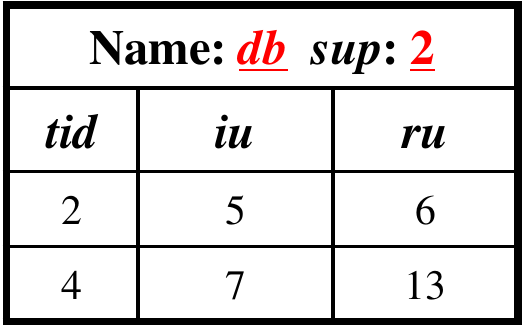}
	\captionsetup{justification=centering}
	\caption{Constructed revised utility-list of 2-itemset ($db$).}
	\label{fig:utilitylistOfBD}
\end{figure}

Unlike the original utility-list \cite{liu2012mining} only deals with utility value, our revised structure can deal with more rich information, including support, correlation and utility. We can perform a single database scan to create the all revised utility-lists of all 1-items in the processed database. After constructing the initial revised utility-list of each 1-item $X \in D$ (denoted as $ X.list$), for any $k$-itemset ($k \geq 2 $), its revised utility-list can be directly constructed using the already built revised utility-lists of its subsets. Note that this operation does not need to scan the database anymore, and the built revised utility-lists fit in main memory. Details of the construction procedure of the revised utility-list are similar to the construction of utility-list, which can be referred to  \cite{liu2012mining}. The difference between them is that after the join operation of two common \textit{tids}, the procedure in the CoUPM algorithm simultaneously updates the support information in the revised utility-list for pattern $ X_{ab}$. This is denoted as $ X_{ab}.list.sup$ ++. For example, the 2-itemset $db$ appears in $T_2$ and $T_4$, and its revised utility-list is constructed based on $ d.list$ and $b.list$, as shown in Figure \ref{fig:utilitylistOfBD}. Note that the construction keeps consistent with  respect to the total order $ c \prec d \prec b \prec e \prec a $.

Note that for optimization, when finding the common \textit{tids} of two itemsets from the two sets of \textit{tids} in the revised utility-lists, we use the binary search to speed up the computational efficiency. For example, we can perform a binary search to find the element with a given \textit{tid} in a target revised utility-list.

\begin{definition}
	\rm Based on the designed revised utility-list, let $X.IU$ and $X.RU$ respectively denote the sum of \textit{utility} values and the sum of \textit{remaining utility} values for a pattern \textit{X} in the constructed  revised utility-list of \textit{X}. According to \cite{lin2016efficient,gan2019beyond}, they can be calculated as follows:
	\begin{equation}X.IU = \sum_{X \subseteq T_c \wedge T_c \in D}u(X,T_c);\end{equation}
	\begin{equation}X.RU = \sum_{X \subseteq T_c \wedge T_c \in D}ru(X,T_c).\end{equation}
\end{definition}

Thus, $X.IU$ of a pattern \textit{X} equals to $u(X)$. Both $X.IU$ and $u(X)$ are the total utility of $X$ in the entire database.

\subsection{Pruning Strategies for Searching CoHUIs}

Similar to previous studies \cite{liu2005two,lin2017fdhup}, the complete search space of the addressed problem can be presented as a Set-enumeration tree \cite{rymon1992search}. This prefix-based tree structure represents all possible itemsets of $I$ where each tree node represents a subset of $I$. It is important to notice that this tree structure is only a conceptual representation and not stored in entirety while performing the mining process. In worst case, this approach may have up to 2$^n$ final itemsets (i.e., all itemsets of the search space with $I$). Without downward closure property, the search space would be huge. To address this limitation, we present a prefix-based depth-first enumeration tree. It means that the node in this tree structure is searched in the depth-first manner.

To speed up performance, the existing CoHUIM algorithm utilizes the \textit{Kulc} measure in non-decreasing order of support count that holds the \textit{sorted downward closure} property \cite{gan2018extracting}. By utilizing the revised utility-list, this \textit{sorted downward closure} property of \textit{Kulc} measure \cite{gan2018extracting} can be applied in the proposed CoUPM algorithm. More importantly, the enumeration of potential patterns may be terminated earlier by \textit{Kulc} value and upper bound on utility. Details of the pruning strategies are described below.

\begin{lemma}[\textbf{Sorted downward closure property of \textit{Kulc}}]
	\label{lemma_Kulc}
	If the items in the set $ \{a_1, a_2, ..., a_k, a_{k+1}\} $ are sorted in support-ascending order, i.e., $ sup(a_1) \leq sup(a_2) \leq ...  \leq sup(a_k) \leq sup(a_{k+1}) $, the \textit{Kulc} measure has the \textit{sorted downward closure} property. That is: $Kulc(a_1... a_k a_{k+1}) \leq Kulc(a_1... a_k) $ \cite{gan2018extracting}.	
\end{lemma}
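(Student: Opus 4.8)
The plan is to argue directly from the definition of the \textit{Kulc} measure (Definition \ref{def_kulc}). Write $X = a_1\ldots a_k$ and $X' = a_1\ldots a_k a_{k+1}$, and abbreviate $S := \sum_{j=1}^{k} sup(X)/sup(a_j)$, so that $Kulc(X) = S/k$ and the goal reduces to showing $Kulc(X') = \frac{1}{k+1}\sum_{j=1}^{k+1} sup(X')/sup(a_j) \le S/k$. I would split the sum defining $Kulc(X')$ into the first $k$ terms (the items shared with $X$) plus the single extra term coming from $a_{k+1}$, and bound the two pieces separately.

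First I would invoke the anti-monotonicity of support: since $X \subseteq X'$, every transaction containing $X'$ also contains $X$, hence $sup(X') \le sup(X)$. Setting $\alpha := sup(X')/sup(X) \le 1$, this gives $\sum_{j=1}^{k} sup(X')/sup(a_j) = \alpha\,S \le S$ for the shared-item part. Next I would handle the extra term $sup(X')/sup(a_{k+1})$ using the sorting hypothesis: because $sup(a_{k+1}) \ge sup(a_j)$ for every $j \le k$, we have $sup(X')/sup(a_{k+1}) \le sup(X')/sup(a_j)$ termwise, and averaging this over $j = 1,\ldots,k$ yields $sup(X')/sup(a_{k+1}) \le \frac{1}{k}\sum_{j=1}^{k} sup(X')/sup(a_j) = \alpha S/k$. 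Substituting both bounds, $Kulc(X') \le \frac{1}{k+1}\bigl(\alpha S + \alpha S/k\bigr) = \frac{\alpha S}{k+1}\cdot\frac{k+1}{k} = \frac{\alpha S}{k} \le \frac{S}{k} = Kulc(X)$, which is the claim. The case $k = 1$ is covered by the same computation.

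The step I expect to be the main obstacle to state cleanly is the bound on the last term: the right comparison is not against a single $sup(a_j)$ but against the \emph{average} of the first $k$ reciprocal-weighted terms, so that the factor $1/k$ it contributes precisely cancels the $1/(k+1)$ normalization against the $k$-term sum. The sorting hypothesis is used exactly and only here, since it is what forces $sup(a_{k+1})$ to be the maximal support and hence $1/sup(a_{k+1})$ the minimal reciprocal among the items of $X'$. Everything else is routine.

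Finally I would note that the argument never used anything about $a_{k+1}$ beyond its support being no smaller than that of the other items, so the property really says that removing the item of largest support cannot decrease \textit{Kulc}; this is the form in which it will be exploited by the depth-first pruning strategy built on the revised utility-list, where items are processed in support-ascending order.
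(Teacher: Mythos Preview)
Your argument is correct: the anti-monotonicity of support gives $sup(X') \le sup(X)$, the sorting hypothesis makes $1/sup(a_{k+1})$ the smallest reciprocal so that the extra term is bounded by the average of the first $k$ terms, and the algebra $\frac{1}{k+1}\bigl(\alpha S + \alpha S/k\bigr) = \alpha S/k \le S/k$ closes the inequality cleanly.

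As for comparison with the paper: the paper does not actually prove Lemma~\ref{lemma_Kulc} here. Its entire proof reads ``A complete proof can be referred to \cite{gan2018extracting}'', i.e., it defers to the earlier CoHUIM paper. So you have supplied a self-contained proof where the present paper gives none. Your closing remark---that only the maximality of $sup(a_{k+1})$ is used, so the lemma really says that removing the largest-support item cannot decrease \textit{Kulc}---is also a useful sharpening that the paper does not make explicit.
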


\begin{proof}	
  \rm	A complete proof can be referred to \cite{gan2018extracting}. 		
\end{proof}

Based on Lemma \ref{lemma_Kulc}, the following \textit{sorted downward closure} property of \textit{Kulc} measure can be held.

\begin{theorem}
	\label{theorem_GDC}
	\rm (\textbf{Anti-monotonicity of \textit{Kulc} with \textit{SDC} property}).	\emph{For any rooted node/itemset in the search space of CoUPM, if a tree node is a correlated pattern, its parent node is also a correlated pattern in $D$. Let $X $ be a $k$-itemset (node), and $X'$ be any of its child nodes (extension, ($k$+1)-itemset). The \textit{Kulc} measure with the \textit{SDC} property is anti-monotonic: $ Kulc(X') \leq Kulc(X) $ always holds.} 
\end{theorem}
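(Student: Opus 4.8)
The plan is to reduce the statement to Lemma~\ref{lemma_Kulc} by exploiting the way the search space is enumerated. First I would recall that the Set\nobreakdash-enumeration tree underlying CoUPM is built with respect to the total order $\prec$, which the algorithm fixes as the support\nobreakdash-ascending order of the $1$-items (ties broken lexicographically, cf.\ Definition~\ref{def_totalOrder}). Consequently, every node of the tree is an itemset whose items, listed according to $\prec$, already appear in support\nobreakdash-ascending order, and a child $X'$ of a node $X = \{a_1, a_2, \dots, a_k\}$ is obtained by appending a single item $a_{k+1}$ that is $\prec$-larger than $a_k$; hence $sup(a_k) \leq sup(a_{k+1})$, and in fact $sup(a_j) \leq sup(a_{k+1})$ for every $j \leq k$.

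Next I would apply Lemma~\ref{lemma_Kulc} directly to the set $\{a_1, \dots, a_k, a_{k+1}\}$: since its items are sorted in support\nobreakdash-ascending order by the previous paragraph, the lemma yields $Kulc(a_1 \cdots a_k a_{k+1}) \leq Kulc(a_1 \cdots a_k)$, i.e.\ $Kulc(X') \leq Kulc(X)$, which is exactly the claimed inequality for one parent--child step. To obtain the statement for an arbitrary rooted node and any of its ancestors, I would iterate this single\nobreakdash-step inequality along the unique path from the node up toward the root and chain the inequalities by transitivity of $\leq$; each step removes the last item of the current itemset, which is always the $\prec$-largest (hence highest\nobreakdash-support, up to ties) item present, so the hypothesis of Lemma~\ref{lemma_Kulc} is met at every step.

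Finally I would derive the ``correlated pattern'' consequence. If a tree node $X'$ is a correlated pattern, then by Definition~\ref{def_10} we have $Kulc(X') \geq minCor$; combining this with $Kulc(X) \geq Kulc(X')$ gives $Kulc(X) \geq minCor$, so its parent $X$ --- and, by the transitive argument above, every ancestor of $X'$ --- is also a correlated pattern in $D$.

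As for difficulty: there is essentially no heavy computation here, since Lemma~\ref{lemma_Kulc} does the real work. The only point that needs care is the bookkeeping that links the abstract tree extension to the hypothesis of the lemma --- namely, verifying that appending a child in the depth\nobreakdash-first enumeration always appends an item of support no smaller than any item already in the node, so that the parent itemset is genuinely the prefix $a_1 \cdots a_k$ required by Lemma~\ref{lemma_Kulc}. Handling the tie case $sup(a_k) = sup(a_{k+1})$ is the only mild subtlety, and it is absorbed by the lexicographic tie\nobreakdash-break built into $\prec$ together with the non\nobreakdash-strict inequality in the lemma.
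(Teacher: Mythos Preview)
Your proposal is correct and follows exactly the route the paper intends: the paper does not give a standalone proof of this theorem but simply states it as an immediate consequence of Lemma~\ref{lemma_Kulc} (``Based on Lemma~\ref{lemma_Kulc}, the following \textit{sorted downward closure} property of \textit{Kulc} measure can be held''), relying on the fact that the enumeration tree is built in support-ascending order. Your write-up makes this implicit reduction explicit and adds the bookkeeping (that a child appends a $\prec$-larger, hence not-smaller-support, item) and the transitive chaining along the path to the root, which is more detail than the paper itself provides.
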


\begin{example} 
	Since the support counts of all 1-items are \{$a$:5, $b$:4, $c$:2, $d$:3, $e$:4\}, this set of 1-items is sorted in support-ascending order as \{$ c \prec d \prec b \prec e \prec a $\}. Based on the definition of \textit{Kulc} value (c.f. Formula 1), we can calculate the \textit{Kulc} values of the following patterns, $\{c,d\}$, $\{c,d,b\}$, $\{c,d,b,e\}$ and $\{c,d,b,e,a\}$, as: \textit{Kulc}($\{c,d\}$) $\approx$ 0.833, \textit{Kulc}($\{c,d,b\}$) $\approx$ 0.722, \textit{Kulc}($\{c,d,b,e\}$) $\approx$ 0.333 and \textit{Kulc}($\{c,d,b,e,a\}$) $\approx$ 0.307.
\end{example}

Thus, the \textit{Kulc} measure holds anti-monotonicity if the processed items are sorted in support-ascending order. Note that the total order $\succ$ of items in the Set-enumeration tree \cite{rymon1992search} for the proposed CoUPM algorithm adopts the support-ascending order. Thus, we can utilize the following properties to prune the search space, and the details are described below.

\begin{lemma}[\textbf{Upper bound on utility}] 
	\label{lemma_CDC}
	For any rooted node/itemset $X$ in the search space of CoUPM, the sum of $X.IU$ and $X.RU$ in the revised utility-list of $X$ is always no less than the overall utility of any of its descendant nodes (extensions, denoted as $ X' $). It is an upper bound on utility, such that $X'.IU \leq X.IU+X.RU$.
\end{lemma}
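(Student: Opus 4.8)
The plan is to argue transaction by transaction, exploiting the fact that in the set-enumeration tree equipped with the total order $\prec$, any descendant $X'$ of $X$ is obtained by appending to $X$ only items that come strictly after every item of $X$ according to $\prec$. First I would record two structural facts: (i) since $X \subseteq X'$, every transaction $T_c$ containing $X'$ also contains $X$, so the set of tids in the revised utility-list of $X'$ is a subset of that of $X$; and (ii) writing $X' = X \cup W$ with $W = X' \setminus X$, every item of $W$ satisfies $X \prec i_j$ in $T_c$. Fact (ii) is exactly where the support-ascending tree order is used, and it is the only delicate point — everything after it is additive bookkeeping.

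Next, I would fix a transaction $T_c$ with $X' \subseteq T_c$ and decompose $u(X',T_c) = u(X,T_c) + u(W,T_c)$. By fact (ii), each item of $W$ appears after $X$ in $T_c$, so $W$ is a subset of the set of items counted in $ru(X,T_c)$; since all unit profits and quantities are positive, $u(W,T_c) \le ru(X,T_c)$. Hence
\begin{equation}
u(X',T_c) \le u(X,T_c) + ru(X,T_c) \quad \text{for every } T_c \supseteq X'.
\end{equation}

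Finally I would sum this inequality over all transactions containing $X'$ to get $X'.IU \le \sum_{X' \subseteq T_c} \big(u(X,T_c) + ru(X,T_c)\big)$, and then enlarge the index set from transactions containing $X'$ to transactions containing $X$ (legitimate by fact (i), and harmless because each summand $u(X,T_c)+ru(X,T_c)$ is nonnegative). The right-hand side is then exactly $\sum_{X \subseteq T_c}u(X,T_c) + \sum_{X \subseteq T_c}ru(X,T_c) = X.IU + X.RU$, which gives the claimed bound $X'.IU \le X.IU + X.RU$. The main obstacle is making fact (ii) rigorous — i.e., justifying from the definition of the depth-first set-enumeration tree under $\prec$ that every extension item genuinely lies in the ``remaining'' part of the transaction — since the rest of the argument is just non-negativity and a telescoping sum over a shrunken transaction set.
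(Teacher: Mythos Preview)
Your argument is correct and is precisely the standard proof of this upper bound. Note, however, that the paper itself does not actually supply a proof of this lemma: it simply writes ``A complete proof of this lemma can be referred to \cite{lin2016efficient,gan2019beyond}'' and defers to prior work. The transaction-by-transaction decomposition you give---using (i) that tids of $X'$ are a subset of tids of $X$, (ii) that every extension item lies strictly after $X$ under $\prec$ and hence contributes to $ru(X,T_c)$, and then summing the per-transaction inequality $u(X',T_c)\le u(X,T_c)+ru(X,T_c)$ over the larger tid set---is exactly the argument found in those references (and originally in the HUI-Miner paper). So there is nothing to contrast: you have reconstructed the intended proof, whereas the present paper omits it.
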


\begin{proof}	
	\rm  A complete proof of this lemma can be referred to \cite{lin2016efficient,gan2019beyond}. 		
\end{proof}

Thus, the sum of the utilities of $ X' $ in \textit{D} would not greater than (\textit{$X.IU$} + \textit{$X.RU$}) of $ X$ in \textit{D}. In other words, (\textit{$X.IU$} + \textit{$X.RU$}) of $ X$ is an upper bound on utility while evaluating the overall utility of a specific pattern.

\begin{example} 
	Consider the running example, assume we perform the depth-first manner in the search space  with the support-ascending order as \{$ c \prec d \prec b \prec e \prec a $\}. When determining the nodes/patterns in the subtree rooted at node $\{c,d,b\}$, we have: $\{c,d,b\}$.$IU$ = \$33, $\{c,d,b\}$.$RU$ = \$19, thus $\{c,d,b\}$.$IU$ + $\{c,d,b\}$.$RU$ = \$33 + \$19 = \$52. All the nodes in the subtree of $\{c,d,b\}$ would not have a utility value higher than the upper bound on utility of node $\{c,d,b\}$. For instance, $\{c,d,b,e\}$.$IU$ = \$31,  $\{c,d,b,a\}$.$IU$ = \$34, $\{c,d,b,e,a\}$.$IU$ = \$42, all are less than \$52. 
\end{example}

\begin{theorem} 
	\label{theorem2}
	(\textbf{Anti-monotonicity of upper-bound on utility}). \emph{For any node/itemset $X$ in the search space of CoUPM, let $X'$ denote any of $X$'s children (extension node). Then the sum of $X.IU$ and $X.RU$ in the revised utility-list of $X$ (equally in the entire database) is always larger than or equal to the sum of $X'.IU$ and $X'.RU$ of $X'$ in the entire database. That is $X'.IU + X'.RU $ $ \leq X.IU + X.RU $ \cite{lin2016efficient,gan2019beyond}.
	}
\end{theorem}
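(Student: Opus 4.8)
The plan is to establish the inequality transaction by transaction and then sum over transactions, exploiting the fixed total order $\prec$ that governs how nodes are extended in the Set-enumeration tree. Write the child as $X' = X \cup \{z\}$, where $z$ is the single item appended to $X$; by construction of the enumeration tree, $X \prec z$, i.e.\ $z$ follows every item of $X$ in the order $\prec$. First I would fix an arbitrary transaction $T_c$ with $X' \subseteq T_c$. Then $X \subseteq T_c$ as well, and directly from the definition of utility, $u(X', T_c) = u(X, T_c) + u(z, T_c)$.

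Next I would compare the two remaining utilities. By definition $ru(X, T_c)$ sums $u(i_j, T_c)$ over all $i_j \in T_c$ with $X \prec i_j$; this range contains $z$ itself together with every item appearing after $z$ in $T_c$, and the latter set is exactly the range defining $ru(X', T_c)$. Hence
\begin{equation}
ru(X, T_c) = u(z, T_c) + \!\!\sum_{X' \prec i_j,\; i_j \in T_c}\!\! u(i_j, T_c) = u(z, T_c) + ru(X', T_c),
\end{equation}
and in particular $ru(X, T_c) \geq u(z, T_c) + ru(X', T_c)$ since all unit profits and quantities are positive. Combining this with the utility identity above, for every transaction containing $X'$,
\begin{equation}
u(X', T_c) + ru(X', T_c) = u(X, T_c) + u(z, T_c) + ru(X', T_c) \leq u(X, T_c) + ru(X, T_c).
\end{equation}

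Finally I would sum this last inequality over all $T_c \supseteq X'$, obtaining $X'.IU + X'.RU \leq \sum_{X' \subseteq T_c}\bigl(u(X,T_c)+ru(X,T_c)\bigr)$. Because the set of transactions containing $X'$ is a subset of those containing $X$, and each summand $u(X,T_c)+ru(X,T_c)$ is non-negative, the right-hand side is bounded above by $\sum_{X \subseteq T_c}\bigl(u(X,T_c)+ru(X,T_c)\bigr) = X.IU + X.RU$, which yields $X'.IU + X'.RU \leq X.IU + X.RU$; iterating along any chain of extensions then extends the bound to arbitrary descendants and re-derives Lemma~\ref{lemma_CDC} as an immediate corollary (drop the non-negative term $X'.RU$). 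The only delicate point — and the place the argument breaks if one is careless — is the bookkeeping of the total order: one must verify that appending $z$ to $X$ only \emph{removes} items ($z$ and possibly items between $X$ and $z$) from the ``remaining'' part and never adds anything, which is precisely why CoUPM extends a node only by items that are larger in $\prec$. Everything else reduces to non-negativity of utilities and a monotone restriction of the index set of the summation.
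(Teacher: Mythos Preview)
Your argument is correct and is essentially the standard transaction-wise proof of the HUI-Miner remaining-utility upper bound. One cosmetic quibble: the first equality in your displayed equation~(1) is not an equality in general, since items $w$ with $X \prec w \prec z$ that happen to lie in $T_c$ also contribute to $ru(X,T_c)$ but not to $ru(X',T_c)$; you clearly recognise this (``in particular $\geq$'' and your closing remark about items between $X$ and $z$), so just write it as an inequality from the outset and the exposition is clean.

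As for comparison with the paper: there is nothing to compare. The paper does not prove Theorem~\ref{theorem2} at all; the statement simply carries a citation to \cite{lin2016efficient,gan2019beyond} and the text moves on. (The closely related Lemma~\ref{lemma_CDC} is likewise dispatched with ``A complete proof of this lemma can be referred to \cite{lin2016efficient,gan2019beyond}.'') Your proposal therefore supplies what the paper omits, and it matches the argument one finds in those cited sources: a per-transaction bound $u(X',T_c)+ru(X',T_c)\le u(X,T_c)+ru(X,T_c)$ followed by summation and the observation that $\{T_c : X'\subseteq T_c\}\subseteq\{T_c : X\subseteq T_c\}$ with non-negative summands.
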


Thus, the sum of total utilities and remaining utilities of $ X $ in $D$ is always larger than or equal to the sum of utilities of its extension in the search space. This upper bound ensures that the \textit{downward closure} property of transitive extensions. Based on the above observations, we can use the following filtering strategies.

\begin{strategy}
	\textbf{Pruning strategy using the SDC property of \textit{Kulc} value, abbreviated as SPK strategy}. \rm Assume the total order $\succ$ of the processed items adopts the support-ascending order. While performing a depth-first search strategy in the search space, if the relative \textit{Kulc} value of any node/itemset $X$ is less than $minCor$, any of its child node is not a CoHUI, and these unpromising patterns can be regarded as irrelevant and pruned directly.
\end{strategy}

\begin{strategy}
	\textbf{Pruning strategy using the upper bound on utility, abbreviated as UBU strategy}. \rm After building the initial revised utility-lists for each 1-itemset, the CoUPM algorithm traverses the search space based on a depth-first search strategy. If the sum of $X.IU$ and $X.RU$ of any node/pattern $X$ is less than $minUtil \times TU $, any of its child node would not be a CoHUI, they can be regarded as irrelevant and pruned directly.
\end{strategy}

To further improve the mining efficiency, the LA-Prune strategy \cite{krishnamoorthy2015pruning} with the upper bound on utility is extended to the proposed algorithm.

\begin{lemma}
	\label{lemma_LAPrune}
	Given two different pattern $X$ and $Y$ ($X \neq Y$), neither $\{X, Y\}$ nor any supersets of $X$ would be a high-utility itemset if \textit{X.IU} + \textit{X.RU} -$ \sum _{X\subseteq T_{q} \wedge T_{q}\subseteq D \wedge Y \nsubseteq T_{q}}((X.iu + X.ru) \leq minUtil) $ \cite{krishnamoorthy2015pruning}.  
\end{lemma}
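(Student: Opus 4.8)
The plan is to establish the LA-Prune inequality by showing that the stated quantity is an upper bound on the utility of every superset of $X$ that also contains $Y$, as well as of $X$ itself, and that this bound cannot exceed $minUtil \times TU$ under the hypothesis. First I would fix notation: let $T(X)$ denote the set of transactions containing $X$, and partition $T(X)$ into those that also contain $Y$, call this $T(X\cup Y)$, and those that do not, call this $T(X)\setminus T(Y)$. By Definition of $X.IU$ and $X.RU$ we have $X.IU + X.RU = \sum_{T_c \in T(X)} (u(X,T_c) + ru(X,T_c))$, and the subtracted term in the lemma is exactly $\sum_{T_q \in T(X)\setminus T(Y)} (X.iu + X.ru)$ over those same transactions, i.e.\ the contribution of the $Y$-free transactions. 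So the left-hand side of the lemma simplifies to $\sum_{T_c \in T(X\cup Y)} (u(X,T_c) + ru(X,T_c))$, the ``redefined'' upper bound restricted to transactions containing both $X$ and $Y$.

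Next I would invoke Lemma~\ref{lemma_CDC} (the remaining-utility upper bound) but applied relative to the transaction set $T(X\cup Y)$ rather than all of $T(X)$: for any extension $Z \supseteq X$ with $Y \subseteq Z$, every transaction containing $Z$ must contain both $X$ and $Y$, hence lies in $T(X\cup Y)$, and in each such transaction $u(Z,T_c) \leq u(X,T_c) + ru(X,T_c)$ because $Z$'s items are $X$'s items plus items appearing after $X$ in the total order $\prec$ (this is the standard remaining-utility argument, the same one underlying Lemma~\ref{lemma_CDC}). Summing over $T(Z) \subseteq T(X\cup Y)$ gives $u(Z) \leq \sum_{T_c \in T(X\cup Y)}(u(X,T_c)+ru(X,T_c))$, which is the simplified left-hand side of the lemma. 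For the case $Z = X$ itself (no $Y$), one checks separately that $u(X) \leq X.IU \leq X.IU + X.RU - (\text{the }Y\text{-free remaining term})$ only if... — actually here I would be careful: the lemma as stated bundles ``$\{X,Y\}$ nor any supersets of $X$,'' and the clean statement is that no superset of $X$ containing $Y$ is high-utility; I would state and prove exactly that, noting that the $\{X,Y\}$ case is the smallest such superset. Then the hypothesis that this bound is $\leq minUtil \times TU$ ($= minUtil$ in the paper's normalization after absorbing $TU$) immediately yields that none of these patterns meets the utility criterion of Definition~\ref{def_10}, so none is a CoHUI.

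The main obstacle I anticipate is bookkeeping around the total order and the precise reading of the lemma's statement. The inequality in the lemma is written somewhat loosely (the parenthesization $((X.iu+X.ru)\leq minUtil)$ is malformed as printed), so the real work is choosing the correct interpretation — namely that $X.IU + X.RU$ minus the sum over $Y$-free transactions of $(X.iu+X.ru)$ is the quantity being compared to $minUtil\times TU$ — and then verifying that this quantity genuinely dominates $u(Z)$ for all the relevant $Z$. The remaining-utility domination step is routine given Lemma~\ref{lemma_CDC}, so I expect to simply cite \cite{krishnamoorthy2015pruning} for the detailed induction on extension depth and restrict the new content to explaining why the $Y$-free transactions can be excluded from the bound. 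A secondary subtlety is that $ru(X,T_c)$ is defined via the total order $\prec$, so I must ensure the extension $Z$ respects that order (which it does, since the set-enumeration tree of CoUPM grows itemsets according to $\succ$); this is the only place the order assumption is actually used.
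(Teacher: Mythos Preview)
The paper does not actually provide a proof of this lemma: it states the result with the citation \cite{krishnamoorthy2015pruning} and immediately moves on to Strategy~3. Your reconstruction is the standard LA-Prune argument and is correct as sketched --- partition $T(X)$ by whether $Y$ occurs, identify the residual quantity as $\sum_{T_c \supseteq X\cup Y}(u(X,T_c)+ru(X,T_c))$, and then apply the remaining-utility domination of Lemma~\ref{lemma_CDC} transaction-wise to bound $u(Z)$ for any $Z\supseteq X\cup Y$. Your caution about the phrase ``any supersets of $X$'' is well placed: the bound you derive only covers extensions that also contain $Y$, which is exactly how LA-Prune is invoked in Algorithm~\ref{AlgorithmConstruct2} (where $Y=X_b$ is fixed). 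Since the paper defers entirely to \cite{krishnamoorthy2015pruning}, your sketch already supplies more than the paper itself does.
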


\begin{strategy}
	\textbf{LA-Prune strategy}. \rm In the search space, let $X$ be a processed pattern (node), and $Y$ be the right sibling node of $X$. If the sum of (\textit{X.IU} + \textit{X.RU}) subtracts the utilities (\textit{X.iu} + \textit{X.ru}) of a set of transactions is less than \textit{minUtil}, the combined pattern $\{X,Y\}$ is not a HUI (also not a CoHUI), and any of child nodes of $X$ would not be a HUI (also not be a CoHUI). During the depth-first search progress, the construction of the revised utility-lists for the children nodes of $X$ is not necessary to be performed.
\end{strategy}

The improved construction procedure is similar to that of revised utility-list. It utilizes the LA-Prune strategy to avoid constructing a huge number of revised utility-lists of the unpromising patterns, as described in Algorithm \ref{AlgorithmConstruct2}.

\renewcommand{\algorithmicrequire}{\textbf{Input:}}
\renewcommand{\algorithmicensure}{\textbf{Output:}}
\begin{algorithm}[!ht]
	\caption{Construction with LA-Prune}
	\label{AlgorithmConstruct2}
	
	\begin{algorithmic}[1]		

		\REQUIRE {$X$: an itemset, $X_{a}$: the extension of $X$ with an item $a$, $X_{b}$: the extension of $X$ with an item $b$ ($a \neq b$)}

		\ENSURE $ X_{ab}$.   
		
		\STATE initialize  $ X_{ab}.list \leftarrow \emptyset $;\
		\STATE set \textit{Utility} = \textit{X.IU} + \textit{X.RU};\
		\FOR {each element/tuple $ E_{a}\in X_{a}.list $}
			\IF {$ \exists E_{a}\in X_{b}.list \wedge E_{a}.tid == E_{b}.tid $}
				\IF{$ X.list \neq \emptyset $}			
					\STATE search for each element $ E\in X.list $ such as $E.tid$ = $E_{a}.tid $;\	
					\STATE $E_{ab} \leftarrow $  $<$$E_{a}.tid, E_{a}.iu$ + $E_{b}.iu - E.iu$, $E_{b}.ru$$>$;\
				\ELSE				
					\STATE $E_{ab} \leftarrow $ $<$$E_{a}.tid$, $E_{a}.iu$ + $E_{b}.iu$, $E_{b}.ru$$>$;\					
				\ENDIF	
		
				\STATE $ X_{ab}.list \leftarrow X_{ab}.list \cup E_{ab} $;\
				\STATE update support information in the revised utility-list for $ X_{ab}$, such as $ X_{ab}.list.sup$ ++.\
			\ELSE {
				\STATE \textit{Utility} = \textit{Utility} - $E_{a}.iu$ - $E_{a}.ru$;\
				\IF {$Utility < minUtil$ }{
					\STATE \textbf{return} \textit{null}.\
				}\ENDIF	
			}\ENDIF	
		\ENDFOR
		
		\STATE \textbf{return} \textit{$ X_{ab} $}
	\end{algorithmic}
\end{algorithm}


\subsection{Main Procedure}

To clarify our methodology, we have illustrated the designed data structure, the key properties of utility and correlation with \textit{Kulc} value, and the upper bound on utility so far. Utilizing the above technologies, the main procedure of the designed CoUPM algorithm is shown in Algorithm \ref{AlgorithmCoUPM}. It takes four parameters as input: 1) an e-commerce quantitative database, $D$; 2) a user-specified profit-table, \textit{ptable}; 3) a minimum positive correlation threshold, \textit{minCor} (0 $\leq minCor \leq $ 1); and 4) a user-specified minimum utility threshold, \textit{minUtil} (0 $\leq minUtil \leq $ 1). When \textit{minCor} is set to 0, it means that CoUPM does not consider the correlation factor.

\renewcommand{\algorithmicrequire}{\textbf{Input:}}
\renewcommand{\algorithmicensure}{\textbf{Output:}}
\begin{algorithm}
	\caption{The CoUPM algorithm}\label{AlgorithmCoUPM}

	\begin{algorithmic}[1]		
		\REQUIRE \textit{D}; \textit{ptable}; \textit{minCor}; \textit{minUtil}.
		\ENSURE \textit{CoHUIs}: the set of correlated high-utility itemsets.      
		
		\STATE scan $D$ once to calculate $ TWU(i) $, construct the \textit{Tidset} of each item $ i\in I$ in $D$, and obtain the \textit{TU};\
		\STATE find all 1-item $i \in I$ such that \textit{TWU}$(i)$ $ \geq$ $ minUtil \times TU $, then put into the set of $I^*$;\
		\STATE use the \textit{Tidset} to sort $I^*$ in the support-ascending order as the total order $\succ$;\ 
		\STATE scan $D$ once again to build the revised utility-list of each itemset $i \in I^*$ using the total order $\succ$;\
		\STATE \textbf{call Search}$\boldmath{(\emptyset, I^*, minCor, minUtil)}$.\		
		\STATE return \textit{CoHUIs}
	\end{algorithmic}
\end{algorithm}

The CoUPM algorithm first scans the database once to calculate $ TWU(i) $ and construct the \textit{Tidset} of each item $ i \in I$ in $D$. The total utility of $D$ is also calculated. Here, the built \textit{Tidset} of all 1-items can be used to sort the items and calculate the \textit{Kulc} value in the later processes. Then all the 1-items which have \textit{TWU}$(i)$ $ \geq$ $ minUtil \times TU$ are put into the set of $I^*$. Thereafter, all patterns do not in the candidate set $I^*$ will be ignored since they cannot be the part of CoHUIs. CoUPM then scans $D$ once again to build the initial revised utility-list of each item $i \in I^*$ using the total order $\succ$.

It is important to notice that the adopted order $\succ$ should be kept consistently after the construction of revised utility-list. In the designed CoUPM algorithm, the support-ascending order is used to hold the \textit{sorted downward closure} property of \textit{Kulc} value. In other words, without using this sorting order, we only can utilize the upper bound on utility w.r.t. Strategy 2 to prune the search space. In the next section of experimental results, we will conduct the proposed CoUPM algorithm with or without using the \textit{sorted downward closure} property of \textit{Kulc} value w.r.t. Strategy 1.

\renewcommand{\algorithmicrequire}{\textbf{Input:}}
\renewcommand{\algorithmicensure}{\textbf{Output:}}
\begin{algorithm}
	\caption{The $Search$ procedure}\label{AlgorithmSEARCH}
	
	\begin{algorithmic}[1]		
		\REQUIRE $X$, $\textit{extensionsOfX}$, $minCor$, $minUtil$.
		\ENSURE the set of \textit{CoHUIs}.

		\FOR {each itemset $ X_{a}\in $ $ \textit{extensionsOfX} $}
			\STATE obtain the $ X_{a}.IU $ and $ X_{a}.RU $ from the built $ X_{a}.list $;
			\STATE calculate the $ Kulc(X_a)$ value using the built $ X_{a}.list $ and \textit{Tidset} of all 1-items;
			\IF {$Kulc(X_a) \geq minCor$ and $ X_{a}.IU\geq minUtil \times TU$}
				\STATE  \textit{CoHUIs} $\leftarrow CoHUIs \cup X_{a} $;\	
		    \ENDIF
		
			\IF {$Kulc(X_a) \geq minCor$ and $ (X_{a}.IU + X_{a}.RU$) $\geq minUtil \times TU$}
				\STATE $ \textit{extensionsOfX}_{a}\leftarrow  \emptyset $;\
				\FOR {each itemset $ X_{b}\in \textit{extensionsOfX} $ such that $ X_{b} $  after  $ X_{a} $}			
					\STATE $ X_{ab}\leftarrow X_{a} \cup X_{b} $;\
					\STATE $ X_{ab}.list\leftarrow $ \textbf{Construct}($X, X_{a}, X_{b}) $;\
					\IF {$ X_{ab}.list \not= \emptyset$}
						\STATE  \textit{extensionsOfX}$_{a}\leftarrow$ \textit{extensionsOfX}$_{a}$ $\cup $ $ X_{ab}.list $;\	
					\ENDIF
				\ENDFOR  		
				\STATE \textbf{call Search}$(X_{a}, \textit{extensionsOfX}_{a}, minCor, minUtil)$.\	
			\ENDIF
		\ENDFOR

		\STATE \textbf{return} \textit{CoHUIs}
	\end{algorithmic}
\end{algorithm}

The $Search$ procedure (as shown in Algorithm \ref{AlgorithmSEARCH}) takes as input: 1) a pattern $X$, 2) extensions of $X$ having the form $X_a$ means that $X_a$ is obtained by appending a pattern $a$ to $X$, 3) $minCor$, and 4) $minUtil$. The \textit{search} procedure operates as follows. It first obtains the $ X_{a}.IU $ and $ X_{a}.RU $ values from the built revised utility-list of $X_a$ (denoted as $ X_{a}.list $) (Line 2). It also calculates the $ Kulc(X_a)$ value using the built $ X_{a}.list $ and \textit{Tidsets} of all 1-items (Line 3 and Eq. (4)). As mentioned previously in Formula 1, the calculation of $ Kulc(X_a)$ value of an itemset $X_a$ is based on all support count of the 1-items containing in this itemset. Notice that here the \textit{Tidsets} of all 1-items just needs to be built once in the first database scan. Since the support count of a special itemset can be easily obtained from its revised utility-list w.r.t. support element, we can quickly calculate this $ Kulc(X_a)$ value.

For each extension $X_a$ of $X$, if the related correlation of $X_a$ is no less than $ minCor$, and the sum of the actual utility of $X_a$ (w.r.t. $ X_{a}.IU $ in  revised utility-list) is no less than \textit{minUtil} $\times TU$, then this pattern is output as a CoHUI (Lines 4 to 5). After that, the designed pruning strategies are used to determine whether the extensions of $X_a$ should be explored or not (Line 6, using Strategy 1 and Strategy 2). This is performed by merging $X_a$ with each extension $X_b$ of $X$ such that $a \succ b$, to form extensions of the form $X_{ab}$ (Lines 9 to 10). The revised utility-list of $X_{ab}$ is then constructed by calling the \textit{Construct} procedure to perform the join operation of the revised utility-lists of $X$, $X_{a}$ and $X_{b}$ (Line 11, details of the construction have been described in Algorithm \ref{AlgorithmConstruct2}). To further filter the unpromising patterns, only the promising patterns with their revised utility-lists would be explored in next extension (Line 11). After all the extensions of the rooted  $X_{a}$ are performed (Line 12), it recursively calls the \textit{Search} procedure with \textit{extensionsOfX}$_{a}$ to continually explore its extension(s) (Line 13).

\section{Experimental Study} \label{sec:experiments}

In this section, we conduct several experiments to demonstrate the effectiveness and efficiency of our proposed model. 

\textbf{Baseline algorithms}. Note that we use one of the traditional HUPM algorithms (e.g., FHM \cite{fournier2014fhm}) and FDHUP to generate the different kinds of discovered results for pattern evaluation, while only the CoHUIM algorithm is compared for efficiency evaluation. The reason is that different kinds of patterns are related to different mining tasks, and they can be used to analyze the effectiveness and usefulness of the CoUPM framework. While the efficiency should be compared with those algorithms which focus on same mining task. Thus, it is unreasonable to evaluate the efficiency by comparing algorithms from different domains.

The CoUPM algorithm is compared with some baseline approaches, including traditional HUPM algorithm which does not consider correlation factor (e.g., HUI-Miner \cite{liu2012mining}, FHM \cite{fournier2014fhm}, and EFIM \cite{zida2017efim}), the frequency-affinity-based FDHUP algorithm \cite{ahmed2011framework}, and the projection-based CoHUIM algorithm \cite{lin2017fdhup}. 

\textbf{Variants of CoUPM algorithm}. Additional to the baseline CoUPM algorithm which only utilizes the Strategy 2, three improved variants, e.g., CoUPM$_{sorted}$ (adopts Strategies 1 and 2), CoUPM$_{LA}$ (adopts Strategies 2 and 3), and CoUPM$_{sorted+LA}$ (adopts Strategies 1, 2 and 3), are used to evaluate the efficiency of the proposed algorithm.

\begin{table*}[htb]
	\fontsize{7.5pt}{9pt}
	\selectfont
	\centering
	\caption{Derived patterns under various parameters}
	\label{table:patterns1}
	\begin{tabular}{|c|c|cccccc|}
		\hline\hline
		\multirow{2}*{\textbf{Dataset}}&
		\multirow{2}*{\textbf{\# Patterns}}
		&\multicolumn{6}{c|}{\textbf{\# patterns under different thresholds}}\\
		\cline{3-8}
		&&$ \alpha_1 $ & $ \alpha_2 $ & $ \alpha_3 $ & $ \alpha_4 $ &  $ \alpha_5 $  & $ \alpha_6 $ \\ \hline

&  \#HUIs  & 93,418	 &  49,821 &  	26,176 &  	14,156 &  	8,364 & 5,365  \\
&  \#DHUIs  & 33,621 &  	16,285 &  	8,712 &  	5,277 &  	3,644 & 2,778  \\
	  &\textbf{\#${P1}$} (\textit{minCor}: 0.01) &	 93,418 & 	49,821 & 	26,176 & 	14,156 & 	8,364 & 	5,365 \\
	 
\textbf{foodmart}  &  \textbf{\#${P2}$} (\textit{minCor}: 0.02)  & 24,857 &  	17,127 &  	12,026 &  	8,557 &	6,290 & 4,683
 	 \\
 &\textbf{\#${P3}$} (\textit{minCor}: 0.03) &  20,224 &  	15,082 &  	11,195 &  	8,247 &  	6,191 & 4,651	 \\
&  \textbf{\#${P4}$} (\textit{minCor}: 0.04) & 6,869 &  	5,675 &  	4,682 &  	3,831 &  	3,222 & 2,764 	 \\
&  \textbf{\#${P5}$} (\textit{minCor}: 0.05) &  4,654 &  	4,092 &  	3,558 &  	3,084 &  	2,712  & 2,405	 \\
&  \textbf{\#${P6}$} (\textit{minCor}: 0.06) & 2,644 & 	2,501 & 	2,344 & 	2,204 & 	2,065 & 	1,944	 \\
\hline

&  \#HUIs  & -  & - &    45,711,058	 &  2,486,972 &  	22,641 &  	15,713   \\
&  \#DHUIs  & 14,539 &  	12,620 &  	11,122 &  	9,873 &  	8,847 &  	7,953  \\
 &  \textbf{\#${P1}$} (\textit{minCor}: 0.10) &  11,510 &  	10,244 &  	9,252 &  	8,474 &  	7,836 &  	7,301   \\
 
\textbf{retail}  &  \textbf{\#${P2}$} (\textit{minCor}: 0.12)  &   10,221 &   	9,224 &   	8,402 &   	7,735 &   	7,170 &   	6,686 	 \\
 &\textbf{\#${P3}$} (\textit{minCor}: 0.14) &   8,734  & 	7,922  & 	7,239  & 	6,676  & 	6,187  & 	5,783 	  \\
&  \textbf{\#${P4}$} (\textit{minCor}: 0.16)  &   7,600  & 	6,929  & 	6,317  & 	5,825  & 	5,402  & 	5,050	  \\
&  \textbf{\#${P5}$} (\textit{minCor}: 0.18) &    7,081  & 	6,469  & 	5,909  & 	5,455  & 	5,063  & 	4,732	  \\
&  \textbf{\#${P6}$} (\textit{minCor}: 0.20)  &   6,823  & 	6,238  & 	5,699  & 	5,260  & 	4,878  & 	4,562	  \\
\hline

&  \#HUIs  & 198,920 &  89,933  & 39,281	 & 16,848 & 	7,141 & 	2,969 \\
&  \#DHUIs  & 0 & 0 & 0 & 0	& 0	 & 0 \\
&  \textbf{\#${P1}$} (\textit{minCor}: 0.74) &  8,717 & 	6,083 & 	4,021 & 	2,493 & 	1,438 & 760 \\
\textbf{chess}  &  \textbf{\#${P2}$} (\textit{minCor}: 0.75)  & 7,330 & 	5,113 & 3,378 & 2,101 & 1,199 & 629 	 \\
&\textbf{\#${P3}$} (\textit{minCor}: 0.76) & 6,062 & 	4,210 & 	2,773 & 	1,704 & 	799	 & 500  \\
&  \textbf{\#${P4}$} (\textit{minCor}: 0.77) &  4,987 & 	3,464 & 	2,282 & 	1,401 & 	799	 & 415  \\
&  \textbf{\#${P5}$} (\textit{minCor}: 0.78) &  4,118 & 	2,856 & 	1,872 & 	1,145 & 	650 & 	333  \\
&  \textbf{\#${P6}$} (\textit{minCor}: 0.79) &  3,316 & 	2,287 & 	1,483 & 	891 & 	503 & 	252 \\
\hline

&  \#HUIs  &  22,121 & 	13,953 & 	7,601 & 		3,420 & 		1,265 & 		356 	 \\
&  \#DHUIs  &  8 & 		2 & 		0 & 		0 & 		0 & 		0 	 \\
 &  \textbf{\#${P1}$} (\textit{minCor}: 0.40)  & 8,732 &	5,320 &	2,723 &	1,145 &	435 &	138 \\
\textbf{mushroom}  &  \textbf{\#${P2}$} (\textit{minCor}: 0.42)  &  7,126 & 	4,422 & 	2,374 & 	1,046 & 	413 & 	129 	 \\
 &\textbf{\#${P3}$} (\textit{minCor}: 0.44) &  5,962 & 	3,778 & 	2,060 & 	928	 & 362 & 	106	 \\
&  \textbf{\#${P4}$} (\textit{minCor}: 0.46)  & 4,783 & 	3,043 & 	1,656 & 	748	 & 280	 & 82	 \\
&  \textbf{\#${P5}$} (\textit{minCor}: 0.48) &  3,595 & 	2,236 & 	1,176 & 	508	 & 195 & 	61 	 \\
&  \textbf{\#${P6}$} (\textit{minCor}: 0.50) &  2,452 & 	1,478 & 	729 & 	301 & 	107 & 	40 	 \\
\hline

&  \#HUIs  &  370,624 & 	167,972 & 	91,529 & 	56,326 & 	37,381 & 	26,385 \\
&  \#DHUIs  & 49,762 & 	25,080 & 	14,922 & 	9,762 & 	6,836 & 	5,042 \\
&  \textbf{\#${P1}$} (\textit{minCor}: 0.015) &    90,087 & 	63,049 & 	46,155 & 	35,124 & 	27,003 & 	21,197	  \\
\textbf{BMSPOS2}  &  \textbf{\#${P2}$} (\textit{minCor}: 0.020)  & 58,665 & 	42,653 & 	32,439 & 	25,609 & 	20,486 & 	16,700	  	 \\
&\textbf{\#${P3}$} (\textit{minCor}: 0.025) &    41,469 & 	30,896 & 	24,034 & 	19,392 & 	15,879 & 	13,254	  \\
&  \textbf{\#${P4}$} (\textit{minCor}: 0.030)  &   30,857 & 	23,459 & 	18,537 & 	15,182 & 	12,584 & 	10,652  \\
&  \textbf{\#${P5}$} (\textit{minCor}: 0.035) &    23,915 & 	18,519 & 	14,840 & 	12,317 & 	10,327 & 	8,854  \\
&  \textbf{\#${P6}$} (\textit{minCor}: 0.040)  &   19,116 & 	14,970 & 	12,146 & 	10,144 & 	8,563 & 	7,413 	  \\
\hline

&  \#HUIs  &  80,933 & 	39,848 & 	27,839 & 	21,103 & 	16,850 & 	13,722  \\
&  \#DHUIs  & 45,994 & 	24,115 & 	15,961 & 	10,503 & 	6,626 & 	4,434  \\
&  \textbf{\#${P1}$} (\textit{minCor}: 0.12) &    18,129&	15,811&	13,790 &	12,082 &	10,396 &	8,841 	  \\
\textbf{T10I4D100K}  &  \textbf{\#${P2}$} (\textit{minCor}: 0.14)  &   15,260 &	13,283 &	11,664 &	10,287 &	8,884 &		7,594  	 \\
&\textbf{\#${P3}$} (\textit{minCor}: 0.16) &   12,824 & 	11,107 & 	9,772 & 	8,629 & 	7,457 & 	6,424  \\
&  \textbf{\#${P4}$} (\textit{minCor}: 0.18)  &   10,667 & 	9,228 & 	8,132 & 	7,181 & 	6,209 & 	5,380  \\
&  \textbf{\#${P5}$} (\textit{minCor}: 0.20) &   8,917 & 	7,685 & 	6,767 & 	5,986 & 	5,186 & 	4,498  \\
&  \textbf{\#${P6}$} (\textit{minCor}: 0.22)  &  7,425 & 	6,391 & 	5,607 & 	4,952 & 	4,284 & 	3,731  \\
\hline

		\hline
\hline
	\end{tabular}
\end{table*}

\subsection{Data Description and Experimental Setup}

\textbf{Datasets}. Typically e-commerce datasets are proprietary and consequently hard to find among publicly available data. To conduct experiments, we use five publicly available real-world datasets (foodmart, chess\footnote{\url{http://fimi.ua.ac.be/data/}}, mushroom$ ^{3} $) and one synthetic dataset (T10I4D100K) in our experiments.  The characteristics of used datasets are described below in details.

\begin{itemize}
	\item \emph{foodmart}: this dataset is provided with Microsoft SQL Server. It contains 21,556 customer transactions and 1,559 distinct items from an anonymous chain store.
	\item  \emph{chess}: it is a dense dataset since it contains 3,196 transactions with 75 distinct items, and the average transaction length is 36 items. 
	\item  \emph{mushroom}: it has 8,124 transactions with 120 distinct items, and the average transaction length is 23 items. Thus, it is also a dense dataset.

	\item  \emph{retail}: it contains 88,162 transactions with 16,470 distinct items. The average transaction length in retail is 10.3 items. 
	\item  \emph{BMSPOS2}: it has 515,597  transactions with 1,657 distinct items, and the average transaction length is 6.53 items. It collects several years worth of point-of-sale data from a large electronics retailer.
	\item  \emph{T10I4D100K}: this is a synthetic dataset, which has 100,000 transactions with 870 distinct items, and the average transaction length is 10.1 items.
	
\end{itemize}

Note that the foodmart dataset already contains the quantity and a unit profit of each item, while chess and mushroom do not contain the quantitative and profit information. Therefore, we use a simulation method, which is widely adopted in previous studies \cite{tseng2013efficient,liu2012mining,lin2017fdhup}, to generate the quantitative and profit information for each item in the chess and mushroom datasets. For the addressed utility-based mining problem, these used datasets having varied characteristics make the experimental results more convincing and acceptable.

\textbf{Language and experimental environment.} All the algorithms in the experiments were implemented in Java language and performed on a personal ThinkPad T470p computer with an Intel(R) Core(TM) i7-7700HQ CPU @ 2.80 GHz 2.81 GHz, 32 GB of RAM, and with the 64-bit Microsoft Windows 10 operating system.

\textbf{Parameter settings.} It is important to notice that both FHM \cite{fournier2014fhm} and FDHUP \cite{ahmed2011framework} are varied by one parameter $minUtil$, while the CoHUIM and CoUPM algorithms discover the CoHUIs by using two constraints: correlation and utility. Therefore, experiments are conducted on each dataset by varying $minUtil$. In addition, the $ minCor $ is adjusted with six times on each dataset to evaluate the effectiveness of mining patterns. Specifically, as shown in Table \ref{table:patterns1}, the six different $ minCor $ thresholds are respectively set on each data. For instance, in foodmart, $ minCor $ is varying from 0.01 to 0.06, such as 0.01, 0.02, 0.03, 0.04, 0.05, 0.06.

\subsection{Effectiveness Analytics}

The addressed problem aims at computing the satisfiable  correlated and high profitable patterns. Thereby, the derived CoHUIs explicitly includes availability of the correlation and utility contribution. To further investigate the effectiveness of the addressed problem for correlated utility-based pattern mining, we plot in Table \ref{table:patterns1} with the results of different kinds of generated patterns under various parameter settings. Note that the \#HUIs is the number of HUIs discovered by one of traditional HUIM algorithms (e.g., FHM), \#DHUIs is the number of discriminative HUIs discovered by FDHUP, and the \#CoHUIs (it is respectively denoted as \#$P1$ to \#$P6$ under six $minCor$ thresholds) is the number of correlated HUIs discovered by the CoHUIM and CoUPM algorithms. In Table \ref{table:patterns1}, $\alpha$ represents \textit{minUtil}.

As shown in Table \ref{table:patterns1}, it can be clearly observed that the number of CoHUIs is always different from that of \#HUIs and \#DHUIs under various \textit{minCor} and \textit{minUtil} thresholds on all test datasets under all parameter settings. More specifically, both the \textit{minCor} and \textit{minUtil} affect the results of CoHUIs, as shown from \#$P1$ to \#$P6$ on each dataset. In general, the numbers of DHUIs and CoHUIs are always smaller than that of HUIs. These results are reasonable since the DHUIs and CoHUIs are determined with not only the utility constraint, but also the correlation measure. Therefore, to derive desired patterns, more criteria can usually be applied to produce fewer patterns. The difference between \#DHUIs and \#CoHUIs indicates that the addressed problem with \textit{Kulc} measure is more acceptable than  the frequency-affinity-based utility mining framework. It is interesting to observe that the number of DHUIs in chess and mushroom datasets under various $minUtil$ thresholds is close to zero.

In addition, the number of patterns discovered by the designed CoPUM algorithm under six $ minCor $ always has: $ \#P1 \geq  \#P2 \geq \#P3 \geq \#P4 \geq \#P5 \geq \#P6$. When $minUtil$ is fixed on a processed dataset, the larger $ minCor $ is, the smaller the number of derived CoHUIs is. For instance, when $minUtil$ is set as 19\% on chess, \#HUIs is 39,281, \#DHUIs is 0, while the number of CoHUIs is changed from 4,021 to 1,483 (details are $ \#P1$ = 4,021,  $ \#P2$ = 3,378,  $\#P3$ = 2,773,  $ \#P4$ = 2,282, $ \#P5$ = 1,872, and $ \#P6$ = 1,483) when $ minCor $ is varying from 0.74 to 0.79. It indicates that the adopted correlated \textit{Kulc} measure is acceptable and useful to extract non-redundant correlated high-utility patterns from quantitative datasets.

\begin{figure*}[htbp]
	\centering 
	\includegraphics[trim=60 0 65 0,clip,scale=0.63]{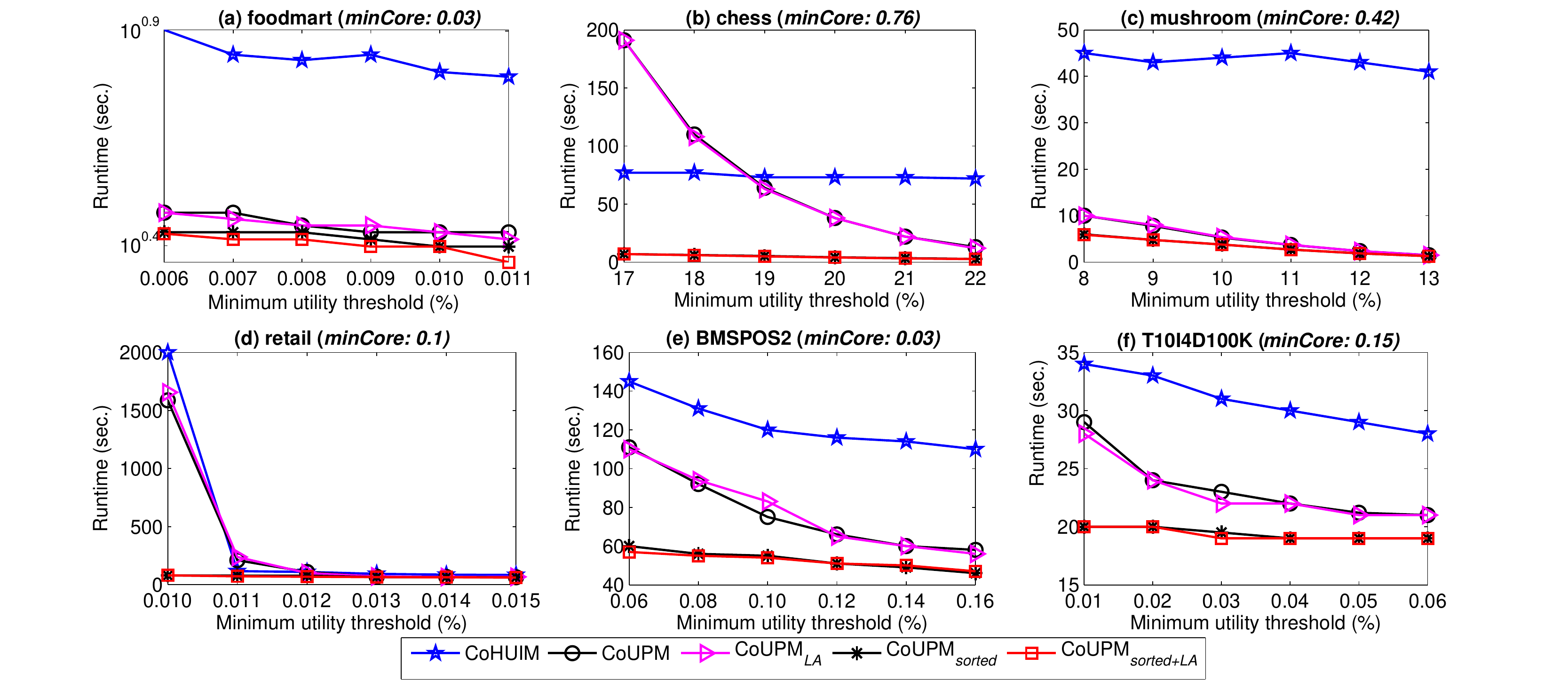}
	\captionsetup{justification=centering}
	\caption{Runtime under various parameters (varying $minUtil$, fix $minCor$).}
	\label{fig:Runtime}	
\end{figure*}

\begin{figure*}[htbp]
	\centering 
	\includegraphics[trim=60 0 65 0,clip,scale=0.63]{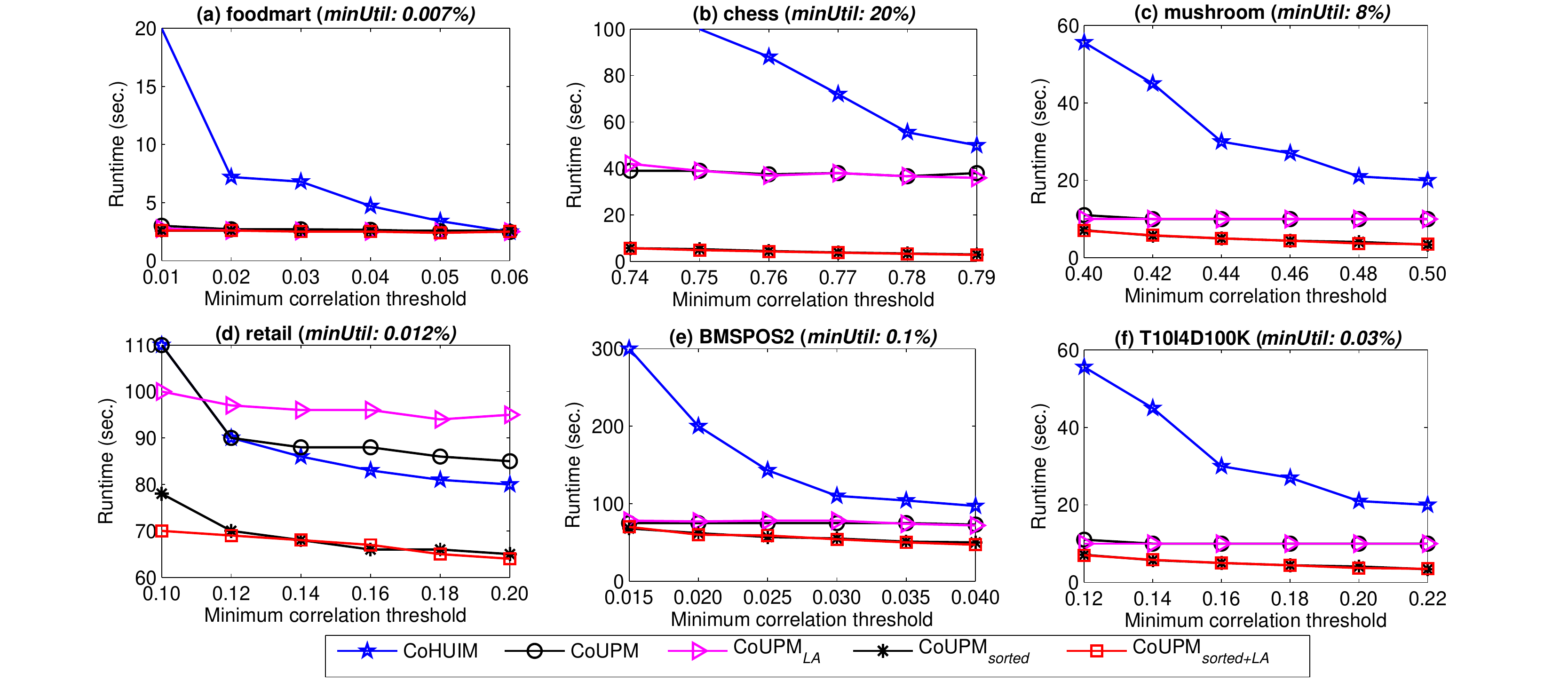}
	\captionsetup{justification=centering}
	\caption{Runtime under various parameters (fix $minUtil$, varying $minCor$).}
	\label{fig:Runtime2}	
\end{figure*}

\subsection{Efficiency Analytics}

From Table \ref{table:patterns1}, we can observe that the mining results with the influence of \textit{minCor} threshold and \textit{minUtil} threshold. In this subsection, we continue to perform the evaluation of efficiency in terms of running time.  To make fair comparison, we use the same parameter settings which are tested in Table \ref{table:patterns1}. Both the minimum correlation threshold and the minimum utility threshold are used to evaluate the efficiency. We investigate the processing time of CoHUIM, CoUPM, and its three improved variants in six real datasets by varying $minUtil$ and $minCore$. When varying one threshold, another one is fixed on each dataset. The results of total execution time of the four variants are presented in Figure \ref{fig:Runtime} and Figure \ref{fig:Runtime2}, respectively. In particular,  CoUPM$_{sorted}$ means the designed CoUPM algorithm with Strategy 1 which utilizes the \textit{sorted downward closure} property of \textit{Kulc} measure), while CoUPM is executed without using Strategy 1.

Firstly, CoUPM with or without Strategy 1 consistently outperforms the state-of-the-art CoHUIM approach, even up to 3 orders of magnitude. In particular, CoUPM$_{sorted}$ outperforms CoUPM in most cases under all parameter settings. For example, in the case in Figure \ref{fig:Runtime}(e), we can obviously observe the difference of the runtime  between CoUPM$_{sorted}$ and CoUPM. When $minUtil$ is set to 20\% on chess dataset, the runtime of CoUPM$_{sorted}$ always closes to 4 seconds, while CoUPM approximately has its processing time as 40 seconds. This difference also can be observed from the other datasets. This observation indicates that the \textit{sorted downward closure} property of \textit{Kulc} measure plays an active role in pruning the search space of the correlation-based CoUPM algorithm.

Based on the observation of runtime between CoUPM and CoUPM$_{LA}$, it indicates that the LA-Prune strategy also plays an active role in filtering the unpromising patterns in some cases. To summarize, the improved algorithms which utilize the powerful pruning strategies always have the best performance compared to the baseline CoUPM algorithm, as well as the CoHUIM algorithm.

It is important to notice that the projection-based CoHUIM algorithm may be very time-consuming on low thresholds or dense datasets. And this computational efficiency problem might be more easily happened in dense datasets, which can been seen in the view of Figure \ref{fig:Runtime}(b), Figure \ref{fig:Runtime}(c), Figure \ref{fig:Runtime2}(b), and Figure \ref{fig:Runtime2}(c), respectively. Overall, the proposed CoUPM algorithm significantly has better performance than the state-of-the-art CoHUIM algorithm in terms of running time and memory consumption. On dense datasets, i.e., chess and mushroom, the consumed memory of CoHUIM is very huge and can up to 50 times than that of CoUPM.

\begin{figure*}[htbp]
	\centering 
	\includegraphics[trim=20 0 50 0,clip,scale=0.56]{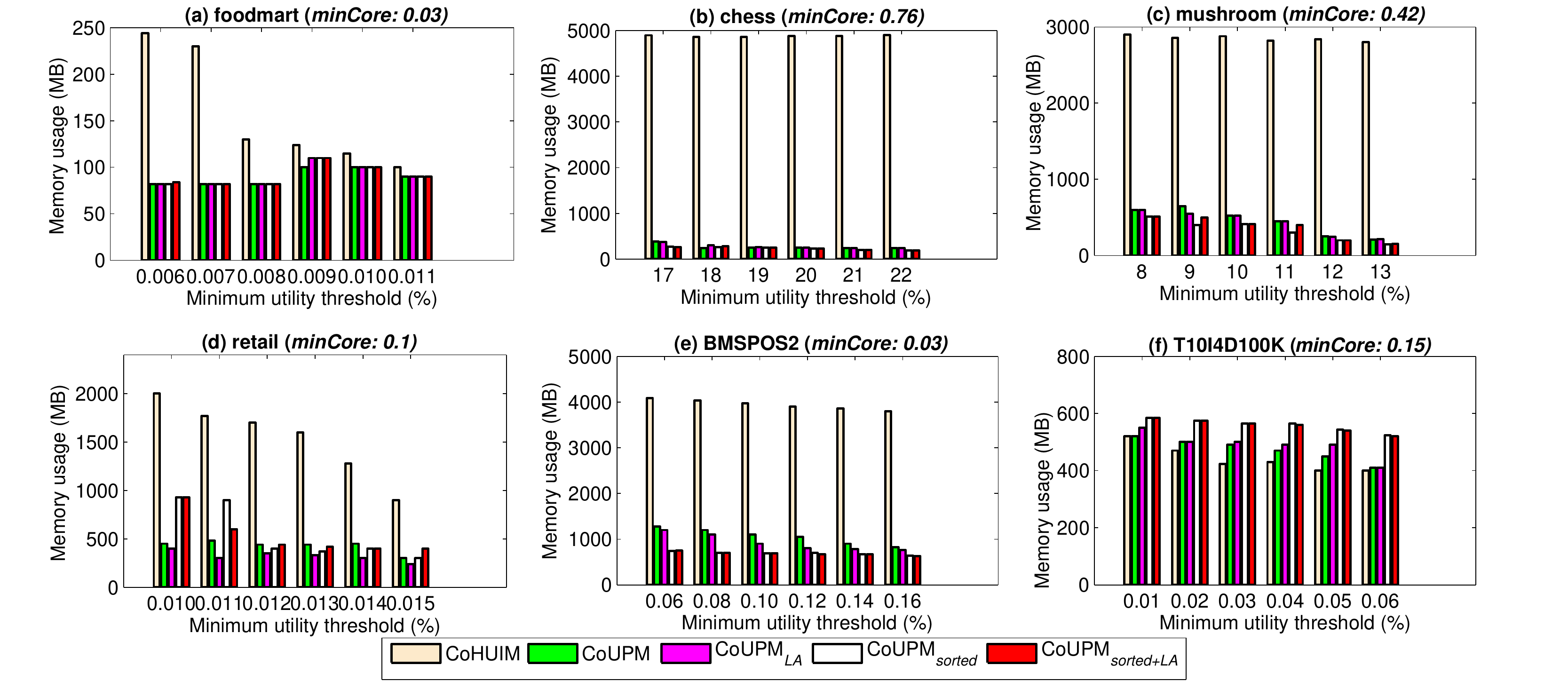}
	\captionsetup{justification=centering}
	\caption{Memory usage under various parameters (varying  $minUtil$, fix $minCor$).}
	\label{fig:Memory}	
\end{figure*}

\subsection{Memory Evaluation}

In this subsection, we continue to evaluate the memory consumption of the compared algorithms. Results of the peak memory usage of CoHUIM and different variants of CoUPM on the all test datasets  with the same parameter settings in Figure \ref{fig:Runtime} and Figure \ref{fig:Runtime2} are shown in Figure \ref{fig:Memory} and Figure \ref{fig:Memory2}, respectively. Note that we use the Java API to calculate the peak memory consumption of each compared algorithm during the whole mining process.

As we can see, all the revised utility-list-based models perform significantly better than the projection-based CoHUIM algorithm, demonstrating the suitability of these models for dense datasets or large-scale datasets. For example, as shown in Figure \ref{fig:Memory}, the peak memory consumption for CoUPM is significantly less than that of CoHUIM. In addition, the improved variants, e.g., CoUPM$_{sorted}$, CoUPM$_{LA}$ and CoUPM$_{sorted+LA}$, consume less memory than the baseline CoUPM algorithm that only adopts the pruning Strategy 2.

The peak memory consumptions under various values of parameters ($minUtil$ and $minCor$) are shown in Figure \ref{fig:Memory} and Figure \ref{fig:Memory2}, respectively. Note that the \textit{y}-axis shows the peak memory consumption of the whole mining process corresponding to the choice of minimum utility threshold (\textit{minUtil}) and minimum correlation threshold (\textit{minCore}). As what can be seen, the proposed CoUPM model with several pruning strategies outperforms CoHUIM for all parameter settings. As mentioned previously, the advantage of CoUPM is that it is able to early filter a large amount of unpromising patterns by building the initial revised utility-lists. As the size of explored pattern increases, the revised utility-list size decreases, thus CoUPM exceeds the available main memory and its overall execution time decreases significantly. For instance, at mushroom (\textit{minCore} = 0.42 and \textit{minUtil} =10\% at Figure \ref{fig:Memory}), CoUPM has a peak memory consumption of 650 MB and requires 6s to discover the required information. For the same minimum thresholds, CoHUIM has a peak memory consumption of 2,900 MB and requires 43s for the mining task. This corresponds to a speed-up of 4.5x in memory, and speed-up of 7x in execution time. For dense data sizes (e.g., chess, mushroom), the speed-up increases further.

\begin{figure*}[htbp]
	\centering 
	\includegraphics[trim=20 0 40 0,clip,scale=0.56]{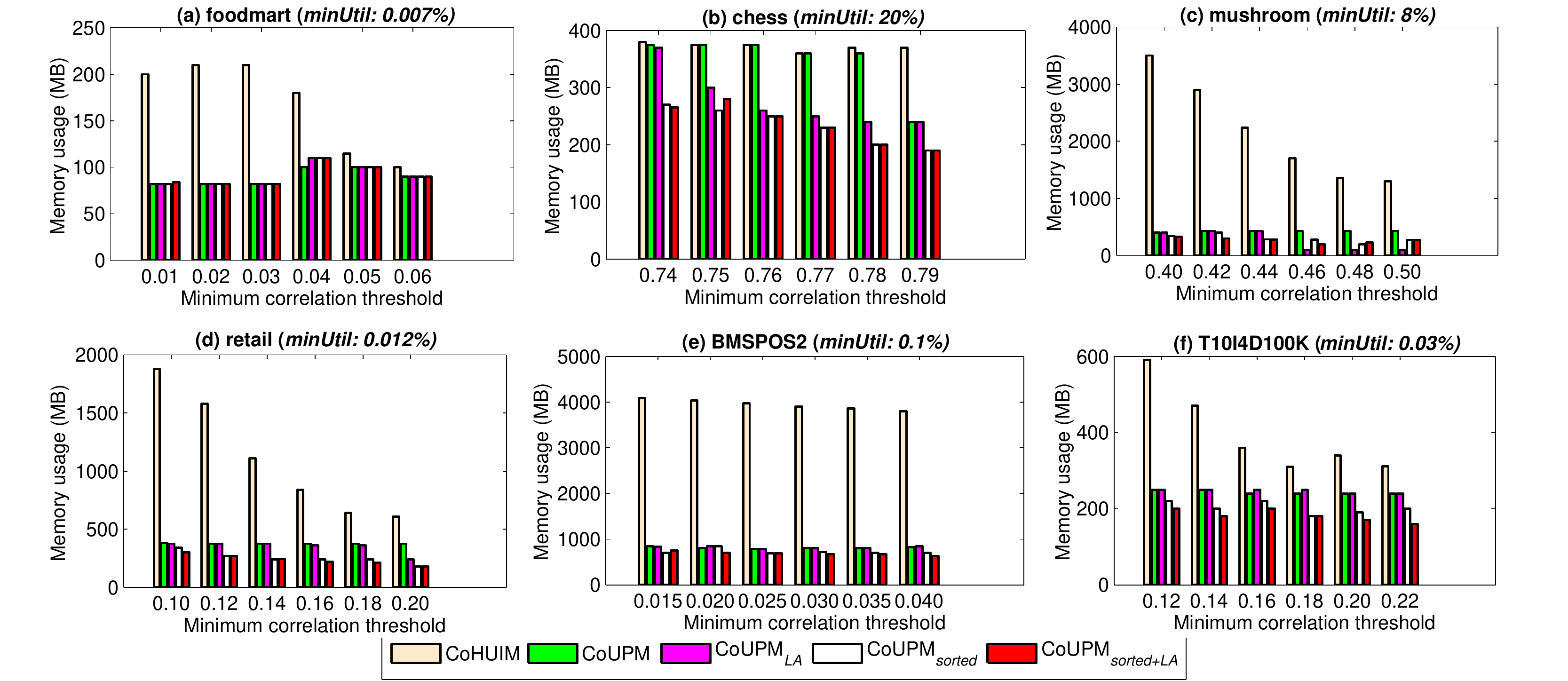}
	\captionsetup{justification=centering}
	\caption{Memory usage under various parameters (fix $minUtil$, varying $minCor$).}
	\label{fig:Memory2}	
\end{figure*}

\subsection{Summary and Discussion}

For the proposed CoUPM, we have the following observations: (1) Best performance is achieved when all data structures (revised utility-lists) fit in main memory. (2) The performance degrades but still remains acceptable while dealing with dense dataset. (3) The low performance of projection-based CoHUIM model may be related to the huge memory consumption which is quite important in utility mining. In summary, we have the following observations of the results.

\begin{itemize}
	\item First, the filtered estimation of upper bound on utility takes a positive role in early pruning the unpromising patterns based on the revised utility-list to store the compact but complete information. 
	
	\item Second, the designed CoUPM algorithm makes use of the compact data structure named revised utility-list. Thus, it can efficiently hold the ``mining during the constructing" property, and the real search space and memory cost can be significantly reduced. On the contrast, the projection CoHUIM approach which recursively projects the sub-databases for next iteration may easily encounter a huge of memory cost, especially on dense datasets. 
	
	\item Third, by utilizing the proposed pruning strategies with the properties of correlation and  upper bound on utility, the search space and memory cost of the CoUPM algorithm is further reduced. The worse performance of the CoHUIM algorithm is caused by the candidate generation-and-test mechanism. 
	
	\item In general, the upper bond on utility used in CoHUIM is not tight enough, and a huge number of candidates are required to be generated although the \textit{sorted downward closure} property of \textit{Kulc} is adopted in CoHUIM to prune the candidates in the search space.  
\end{itemize}

\section{Conclusion and Future Work}
\label{sec:conclusion}

In this paper, we have presented an efficient utility mining framework named CoUPM for discovering non-redundant correlated high-utility patterns from quantitative databases. It studies the problem of utility-based pattern mining by measuring both correlation and availability of utility.  Based on the revised utility-list, CoUPM does not need to scan the database with multiple times. It relies on several pruning strategies, which utilize the sorted downward closure of \textit{Kulc} and upper bound on utility based on the concept of \textit{remaining utility}. Moreover, CoUPM can directly discover the desired patterns from the quantitative databases by avoiding performing costly intersection operations of revised utility-lists. The extensive performance on several real-world datasets demonstrates the effectiveness and efficiency of the CoUPM algorithm. 

For the future work, we plan to improve the mining efficiency by developing a new data structure instead of using the utility-list for the addressed problem. Secondly, we would focus on other practical effectiveness issues of utility mining. For example, we would like to conduct further research of the proposed model to deal with the dynamic utility mining \cite{lin2015fast,2gan2018survey},  utility mining on uncertain data \cite{lin2016efficient}, and privacy issue \cite{gan2018privacy}. Lastly, it is also interesting to take the other interesting extensions and applications into account for our future studies.


\section*{Acknowledgment}

This work was partially supported by the Shenzhen Technical Project under JCYJ 20170307151733005 and KQJSCX  20170726103424709.  Specifically, Wensheng Gan was supported by the CSC (China Scholarship Council) Program during the study at University of Illinois at Chicago, IL, USA.

\section*{References}

\bibliography{main} 

\end{document}